\providecommand{\U}[1]{\protect\rule{.1in}{.1in}}
\theoremstyle{plain} { \theorembodyfont{\rmfamily}

}
\newtheorem{proposition}{Proposition}
\newtheorem{theorem}{Theorem}
\newcommand{\qed}{\hfill \mbox{\raggedright \rule{.07in}{.1in}}}
\newenvironment{proof}{\vspace{1ex}\noindent{\bf Proof}\hspace{0.5em}}
{\hfill\qed\vspace{1ex}}
\begin{document}

\title{Distributionally Robust Mean-Variance Portfolio Selection with Wasserstein
Distances\thanks{X. Y. Zhou gratefully acknowledges financial support through
a start-up grant at Columbia University and through the FDT Center for
Intelligent Asset Management. Blanchet gratefully acknowledges support from the National Science Foundation.}}
\author{Jose Blanchet\thanks{Department of Management Science and Engineering,
Stanford University, Stanford, California 94305, USA,
\texttt{jblanche@stanford.edu}}\ \ \ Lin Chen\thanks{Department of Industrial Engineering and Operations Research,
Columbia University, New York, New York 10027, USA,
\texttt{lc3110@columbia.edu}}\ \ \ Xun Yu
Zhou\thanks{Department of Industrial Engineering and Operations Research,
Columbia University, New York, New York 10027, USA,
\texttt{xz2574@columbia.edu}}}
\maketitle

\begin{abstract}
We revisit Markowitz's mean-variance portfolio selection model by considering
a distributionally robust version, where the region of distributional
uncertainty is around the empirical measure and the discrepancy between probability
measures is dictated by the so-called Wasserstein distance. We reduce this
problem into an empirical variance minimization problem with an additional
regularization term.
%that justifies the
%widely adopted ``regularization" technique in practice.
Moreover, we extend recent inference methodology in order to select the
size of the distributional uncertainty as well as the associated robust target return rate in a
data-driven way.
%Finally, we provide numerical experiments comparing the
%performance of our portfolios against those of the traditional single-period
%and continuous-time Markovitz models.

\bigskip

\noindent\textbf{Key Words.} Mean--variance portfolio selection, robust model,
Wasserstein distance, robust Wasserstein profile inference.

\end{abstract}

\section{Introduction}

We study data-driven mean--variance portfolio selection with model uncertainty
(or ambiguity). The classical Markowitz mean--variance model (Markowitz 1952)
is to choose a portfolio weighting vector $\phi\in\mathbb{R}^{d}$ (all the
vectors in this paper are, by convention, columns) among $d$ stocks to
maximize the risk-adjusted expected return. The precise formulation
is\footnote{There are several mathematically equivalent formulations of the
original mean--variance model.}
\begin{equation}
\min_{\phi\in\mathbb{R}^{d}}\{\phi^{T}Var_{P^{\ast}}\left(  R\right)
\phi:\phi^{T}1=1,\;\phi^{T}E_{P^{\ast}}\left(  R\right)  =\rho\}, \label{Op}%
\end{equation}
where $R$ is the $d$-dimensional vector of random returns of the stocks;
$P^{\ast}$ is the probability measure underlying the distribution of $R$;
$E_{P^{\ast}}$ and $Var_{P^{\ast}}$ are respectively the expectation and
variance under $P^{\ast}$; and $\rho$ is the targeted expected return of the portfolio.

%we assume that
%$R=\textcolor{red}{(\tilde{R}_1,\tilde{R}_2,...,\tilde{R}_d)^T}\in\mathbb{R}^{d}$, we use $\phi\in\mathbb{R}^{d}$ to denote the portfolio
%weights (all vectors in this paper are encoded as columns). Markowitz
%mean-variance problem takes the form%
%\begin{equation}
%\min_{\phi:\phi^{T}1=1}\{\phi^{T}Var_{P^{\ast}}\left(  RR^{T}\right)  \phi
%:\phi^{T}E_{P^{\ast}}\left(  R\right)  =\rho\}.\label{Op}%
%\end{equation}
%It wants to minimize the portfolio variance given the expected target return is $\rho$. We use $P^{\ast}\left(  \cdot\right)  $ to denote
%the probability measure underlying the distribution of $R$, and we use
%$E_{P^{\ast}}\left(  \cdot\right)  $ to denote the associated expectation
%operator
%
%. (Throughout our discussion we assume that $\rho\in(\min_{i=1}%
%^{d}E_{P^{\ast}}\left(  \tilde{R}_{i}\right)  ,\max_{i=1}^{d}E_{P^{\ast}}\left(
%\tilde{R}_{i}\right)  )$; so, in particular, (\ref{Op}) is feasible. \textbf{WHY? ASSUME NO-SHORT SELLING? I THINK WE CAN DELETE THIS SENTENCE})

It is well known that this model has a major drawback. On one hand,
its solutions are very sensitive to the underlying parameters, namely the mean
and the covariance matrix of the stocks. On the other hand, $E_{P^{\ast}}$ is
unknown in practice; so one has to resort to the empirical versions of the
mean and the covariance matrix instead, which are usually significantly
deviated from the true ones (especially the mean, due to the notorious
\textquotedblleft mean blur\textquotedblright\ problem).

This motivates the development of the \textquotedblleft
robust\textquotedblright\ formulation of the Markowitz model which recognizes
and tries to account for the impact of the (potentially significant)
discrepancies between $P^{\ast}$ and its empirical version. This idea
originates in the robust control approach in control theory (see Peterson, James and Dupuis (2000)). Hansen and Sargent (2008) give a systematic
account on applications of robust control to economic models. There is also a
rich literature on robustification of portfolio choice. Lobo and Boyd (2000)
are among the first to provide a worst-case robust analysis with respect to the
second-order moment uncertainty within the Markowitz framework. Pflug and
Wozabal (2007) formulate a Markowitz model with distributional robustness
based on a Wasserstein distance, a metric measuring the discrepancy between
two probability measures which we also apply in this paper. Nevertheless,
their formulation involves an additional value-at-risk type of constraint that
leads to a much more complex optimization problem. More importantly, their
choice of the uncertainty size is exogenous and no guidance for optimally
selecting the size is given. Esfahani and Kuhn (2017) provide representations
for the worst-case expectations in a Wasserstein-based ambiguity set centered
at the empirical measure, and then apply their results to portfolio selection
using different risk measures, leading to models different from the Markowitz
model. The choice of the uncertainty size is suboptimal because it dramatically deteriorates
%\footnote{In \cite{EsfahaniKuhn.{2017}} the size of $\delta$ is of order $O(n^{-\frac{1}{d}})$.}
with the dimension of the underlying portfolio.
%(in
%contrast to the choice that we propose).

Along a line different from full distributional uncertainty, Delage and Ye
(2010) construct uncertainty regions only involving means and covariances of
the return vector. Wozabal (2012) also considers a robust portfolio model with
risk constraints based on expected short-fall, resulting in an optimization
problem that requires solving multiple convex problems.
%In
%contrast, our formulation leads to a single (routine) optimization problem.
Again, these papers do not consider the choice of the size of the uncertainty sets.

Papers that address different optimization techniques (such as interior point
methods, conic programming and linear matrix inequalities) in solving robust
portfolio selection include Halldorsson and Tutuncu (2000), Costa and Paiva
(2001) and Ghaoui, Oks and Oustry (2003). Finally, we mention the works of Goh
and Sim (2010) and Wisesmann, Kuhn and Sim (2014) who investigate different
forms of distributional ambiguity sets, as well as those of Goh and Sim
(2013) and Jiang and Guan (2016) who study distributional robust
formulations based on the Kullback-Leibler divergences. It is worth noting
that the Kullback-Leibler divergence-based formulation is popular in
economics (see Hansen and Sargent (2008)).

%, which we shall
%denote as $P_{n}\left(  \cdot\right)$\footnote{$P_{n}\left(  \cdot\right)$ is the empirical probability measure and $n$ is the sample size. See definition in Section 2.}, which is based on historical
%information.
In this paper, we are interested in studying a distributionally robust
optimization (DRO) formulation of the mean--variance problem, given by
\begin{equation}
\min_{\phi\in\mathcal{F}_{\delta,\bar{\alpha}}\left(  n\right) }\;\;\max
_{P\in\mathcal{U}_{\delta}(P_{n})}\{\phi^{T}Var_{P}\left(  R\right)  \phi\},
\label{Pp}%
\end{equation}
where $P_{n}$ is the empirical probability derived from historical information
of the sample size $n$, $\mathcal{U}_{\delta}(P_{n}):=\{P:D_{c}(P,P_{n}%
)\leq\delta\}$ is the ambiguity set, $\mathcal{F}_{\delta,\bar{\alpha}}\left(
n\right) =\{\phi:\phi^{T}1=1,\min_{P\in\mathcal{U}_{\delta}(P_{n})}[E_{P}\left(  \phi^{T}R\right)
 ]\geq\bar{\alpha}\}$ is the feasible region
of portfolios, $E_{P}$ and $Var_{P}\left(  R\right)  $ denote respectively the
mean and the covariance matrix under $P$,
%that is%
%\[
%Var_{P}\left(  R\right)  =E_{P}\left(  \left(  R-E_{P}R\right)  \left(
%R-E_{P}R\right)  ^{T}\right)  ,
%\]
and $D_{c}\left(  \cdot,\cdot\right)  $
%\footnote{Essentially this discrepancy is known as optimal transport discrepancy. $c$ is a cost function. We will quickly review in Section 2 the basic definitions involving optimal transport costs.}
is a notion of discrepancy between two probability measures based on a
suitably defined Wasserstein distance.\footnote{Recent work by Blanchet, Kang
and Murthy (2016) shows that a similar definition of discrepancy in some
other models recovers exactly some well-known machine learning algorithms, such
as square-root Lasso and support vector machines.}

Intuitively, formulation (\ref{Pp}) introduces an artificial adversary $P$
(whose problem is that of the inner maximization) as a tool to account for the
impact of the model uncertainty around the empirical distribution. There are
two key parameters, $\delta$ and $\bar{\alpha}$, in this formulation, and they
need to be carefully chosen. The parameter $\delta$ can be interpreted as the
power given to the adversary: The larger the value of $\delta$ the more power
is given. If $\delta$ is too large relative to the evidence (i.e. the size of
$n$), then the portfolio selection will tend to be unnecessarily conservative.
On the other hand, $\bar{\alpha}$ can be regraded as the lowest acceptable
target return given the ambiguity set. Naturally, the choice of $\bar{\alpha}$
should be based on the original target $\rho$ given in (1); but one also needs
to take into account the size of the distributional uncertainty, $\delta$.
Using $\bar{\alpha}=\rho$ will tend to generate portfolios that are too
aggressive; it is more sensible to choose $\bar{\alpha}<\rho$ in a way such
that $\rho-\bar{\alpha}$ is naturally informed by $\delta$.
%Of course, we expect that our choice of $\bar{\alpha}$
%naturally yields $\rho-\bar{\alpha}=O\left(  1/n^{1/2}\right)  $ (due to the
%Central Limit Theorem), and this is a feature that we will be able to verify.

%Similar formulations to (\ref{Pp}) have been proposed in the literature, but
%without using the Wasserstein distance and with no satisfactory statistical
%methodology for choosing the uncertainty sizes in a data-driven (satisfactory
%in the sense of being both distributionally robust and yielding optimal
%convergence rates, in the sense of a reasonable criterion, for the choices of
%distributional uncertainty sizes).

This paper makes two main contributions. First, we show that
(\ref{Pp}) is equivalent to an (explicitly formulated) non-robust minimization
problem in terms of the empirical probability measure in which a proper
penalty term or \textquotedblleft regularization term\textquotedblright\ is
added to the objective function. This connects to the {\it direct} use of
regularization in variance minimization techniques widely employed both in the
machine learning literature and in practice.
%\begin{equation}\label{thm1}
%\min_{\phi:\phi^{T}1=1,\phi^{T}E_{P_{n}}[R]\geq\bar{\alpha}+\sqrt{\delta
%}||\phi||_{p}}\left\{  \sqrt{\phi^{T}Var_{P_{n}}\left(  R\right)  \phi}%
%+\sqrt{\delta}||\phi||_{p}\right\}  .
%\end{equation}
Indeed, practitioners who use mean--variance portfolio selection models often
introduce regularization penalties, inspired by Lasso, in order to enhance the
sparsity leading to fewer stocks in their portfolios. Our use of Wasserstein
distance to model distributional uncertainty naturally gives rise to a
regularization term, suggesting an alternative, yet theoretical, justification
for its use in practice.
%In above presentation, there is a regularization term $\sqrt{\delta} ||\phi||_p$, which is a great advantage of Lasso. Our choice of Wasserstein distance to model distributional uncertainty is
%motivated by our goal to explain the use of regularization in mean-variance
%portfolio selection.
%So, the role of
%regularization can be seen as a direct consequence of a DRO formulation
%involving the Wasserstein distance. On the other hand, practitioners who use
%mean-variance portfolio selection models often introduce regularization
%penalties inspired by Lasso.\\
Our result shows that our robust strategies are able to enhance out-of-sample
performance with basically the same level of computational tractability as
standard mean-variance selection.

Our second main contribution provides guidance on the choice of the size of
the ambiguity set, $\delta$, as well as that of the worst mean return target, $\bar\alpha$.
 This is accomplished by adapting and extending
the robust Wasserstein profile inference (RWPI) framework, recently introduced
and developed by Blanchet, Kang and Murthy (2016), in a data-driven way
that combines optimization principles and basic statistical theory under suitable
mixing conditions on historical data.

The rest of the paper is organized as follows: We split Section \ref{Sec_Main}
into three parts, with the assumptions discussed in the first part,
followed by the tractability of our distributionally robust optimization
formulation (culminating with Theorem \ref{ThmMain1}), and the choice of
distributional uncertainty (see Theorem \ref{ThmMain2} and Section
\ref{Sec_Choice_Alpha}).
%Then, in Section \ref{Sec_Empirical}, we study the
%empirical performance of our strategies against natural mean-variance
%selection methodology in order to concentrate on performance measures, which
%are similar to what we consider here.
Some conclusions and extensions are
given in Section \ref{Sec_Conc}. The technical proofs of our results are given
in various appendices at the end of the paper.

\section{Formulation and Main Results\label{Sec_Main}}

\subsection{Basic notation and assumptions}

In this subsection, we introduce our assumptions and notation, and review some
useful concepts.

The mathematical formulation of our problem is given by (\ref{Pp}); but we now
need to specify the critical measure $D_{c}\left(  \cdot\right)  $. Let
$\mathcal{P}(\mathbb{R}^{d}\times\mathbb{R}^{d})$ be the space of Borel
probability measures supported on $\mathbb{R}^{d}\times\mathbb{R}^{d}$. A
given element $\pi\in\mathcal{P}(\mathbb{R}^{d}\times\mathbb{R}^{d})$ is
associated to a random vector $\left(  U,V\right)  $, where $U\in
\mathbb{R}^{d}$ and $V\in\mathbb{R}^{d}$, in the following way: $\pi
_{U}\left(  A\right)  =\pi\left(  A\times\mathbb{R}^{d}\right)  $ and $\pi
_{V}\left(  A\right)  =\pi\left(  \mathbb{R}^{d}\times A\right)  $ for every
Borel set $A\subset\mathbb{R}^{d}$, where $\pi_{U}$ and $\pi_{V}$ are
respectively the distributions of $U$ and $\pi_{V}$.

Let us introduce a cost function $c:\mathbb{R}^{d}\times\mathbb{R}%
^{d}\rightarrow\lbrack0,\infty]$, which we shall assume to be lower
semicontinuous and such that $c\left(  u,u\right)  =0$ for any $u\in
\mathbb{R}^{d}$.

Now, given two probability distribution $P$ and $Q$ supported on
$\mathbb{R}^{d}$ and a cost function $c$, define
\begin{equation}
D_{c}(P,Q):=\inf\{E_{\pi}[c(U,W)]:\pi\in\mathcal{P}(\mathbb{R}^{d}%
\times\mathbb{R}^{d}),\pi_{U}=P,\pi_{W}=Q\},
\end{equation}
which can be interpreted as the optimal (minimal) transportation cost (also
known as the optimal transport discrepancy or the Wasserstein discrepancy) of
moving the mass from $P$ into the mass of $Q$ under a cost $c\left(
x,y\right)  $ per unit of mass transported from $x$ to $y$. If for a given
$p>0$, $c^{1/p}\left(  \cdot\right)  $ is a metric, then so is $D_{c}^{1/p}$
(see Villani (2003)). Such a metric $D_{c}^{1/p}$ is known as a Wasserstein
distance of order $p$. Most of the times in this paper, we choose the following cost
function
\begin{equation}
c(u,w)=||w-u||_{q}^{2}%
\end{equation}
where $q\geq1$ is fixed (which leads to a Wasserstein distance of order $2$%
).\footnote{Different cost functions can be used, resulting in different
regularization penalties, as we will discuss in Section 5: Concluding Remarks.}
Finally, we define the ambiguity set $\mathcal{U}_{\delta}(P_{n})$ as
\[
\mathcal{U}_{\delta}(P_{n})=\{P:D_{c}(P,P_{n})\leq\delta\},
\]
where $P_{n}$ is the empirical probability measure with a sample size $n$,
i.e
\[
P_{n}(dr)=\frac{1}{n}\sum\limits_{i=1}^{n}\delta_{R_{i}}(dr)
\]
where $R_{i}$ $(i=1,2,...,n)$ are realizations of $R$ and $\delta_{R_{i}}%
(\cdot)$ is the indicator function.

\subsection{Computational tractability}

We now reformulate (\ref{Pp}) in a way that becomes computationally tractable.
The first step is to show that the feasible region over $\phi$ in the outer
minimization part can be explicitly evaluated. This is given in the following
proposition, whose proof is relegated to the Appendix.

\begin{proposition}
\label{prop1} For $c(u,w)=||u-w||_{q}^{2}$, $q\geq1$, we have
\begin{equation}
\min\limits_{P\in\mathcal{U}_{\delta}(P_{n})}E_{P}(\phi^{T}R)=E_{P_{n}}%
(\phi^{T}R)-\sqrt{\delta}||\phi||_{p}, \label{outermin}%
\end{equation}
with $1/p+1/q=1$.
\end{proposition}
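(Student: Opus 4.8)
The plan is to recast the inner minimization as an optimization over transport plans and then exploit the atomic structure of the empirical measure $P_n$. By the definition of $D_c$, every feasible $P$ is the first marginal $\pi_U$ of some coupling $\pi$ with second marginal $\pi_W = P_n$, and since $E_P(\phi^T R) = E_\pi[\phi^T U]$, the left-hand side of (\ref{outermin}) equals
\[
\min_\pi\left\{E_\pi[\phi^T U] : \pi_W = P_n,\ E_\pi[||U - W||_q^2]\leq\delta\right\}.
\]
Because $P_n$ places mass $1/n$ on each of $R_1,\ldots,R_n$, such a $\pi$ is fully described by the conditional law of $U$ given $W = R_i$; writing $U = R_i + \Delta_i$ under that conditional, the objective becomes $E_{P_n}(\phi^T R) + \frac{1}{n}\sum_{i=1}^n E[\phi^T \Delta_i]$ and the budget constraint becomes $\frac{1}{n}\sum_{i=1}^n E[||\Delta_i||_q^2]\leq\delta$.

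Next I would reduce the vector problem to a scalar one. For each $i$, Hölder's inequality with the conjugate exponents $1/p + 1/q = 1$ gives the pointwise bound $\phi^T\Delta_i \geq -||\phi||_p\,||\Delta_i||_q$, with equality attainable by aligning $\Delta_i$ with the dual-norm maximizing direction. Setting $s_i := E[||\Delta_i||_q]$ and applying Jensen's inequality to the convex map $t \mapsto t^2$ yields $\frac{1}{n}\sum_i s_i^2 \leq \frac{1}{n}\sum_i E[||\Delta_i||_q^2] \leq \delta$, so the whole problem is bounded below by
\[
E_{P_n}(\phi^T R) - ||\phi||_p\cdot\max\left\{\tfrac{1}{n}\textstyle\sum_{i=1}^n s_i : \tfrac{1}{n}\sum_{i=1}^n s_i^2\leq\delta,\ s_i\geq 0\right\}.
\]
By Cauchy--Schwarz, $\frac{1}{n}\sum_i s_i \leq (\frac{1}{n}\sum_i s_i^2)^{1/2}\leq\sqrt{\delta}$, with equality when the $s_i$ are all equal and the budget is saturated, i.e. $s_i = \sqrt{\delta}$. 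Choosing deterministic transports $\Delta_i$ of common magnitude $\sqrt{\delta}$ in the Hölder-optimal direction attains this bound, giving exactly $E_{P_n}(\phi^T R) - \sqrt{\delta}\,||\phi||_p$.

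The main obstacle is the rigorous justification that randomized couplings cannot improve on deterministic ones and that the infimum is attained; in the argument above this is precisely where Jensen's inequality (forcing each $\Delta_i$ to be deterministic at optimality) and the explicit optimizer do the real work, while everything else is the two applications of Hölder and Cauchy--Schwarz. An alternative that sidesteps attainment entirely is strong Lagrangian duality for Wasserstein robust expectations: writing the minimization as $-\max_P E_P[-\phi^T R]$ and invoking the dual representation reduces the inner problem to $\min_{\lambda\geq 0}\{\lambda\delta + E_{P_n}[\sup_u(-\phi^T u - \lambda||u-R||_q^2)]\}$, whose inner supremum evaluates in closed form to $-\phi^T R + ||\phi||_p^2/(4\lambda)$ via Hölder, and minimizing $\lambda\delta + ||\phi||_p^2/(4\lambda)$ over $\lambda\geq 0$ returns $\sqrt{\delta}\,||\phi||_p$, recovering the same expression.
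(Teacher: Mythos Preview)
Your proposal is correct. Your primary argument is a direct \emph{primal} attack: you parametrize feasible couplings via conditional displacements $\Delta_i$, then peel off the optimum through H\"older (to reduce the vector displacement to its $\ell_q$-norm), Jensen (to show deterministic displacements suffice), and Cauchy--Schwarz (to allocate the budget uniformly), finishing with an explicit optimizer. The paper instead takes the \emph{dual} route you sketch as an alternative: it writes the problem as $-\max_P E_P[(-\phi)^T R]$, invokes the strong-duality representation for Wasserstein worst-case expectations (their cited Proposition~4 of Blanchet, Kang and Murthy (2016)) to obtain $\inf_{\lambda\ge 0}\{\lambda\delta + E_{P_n}[\sup_u(-\phi^T u - \lambda\|u-R\|_q^2)]\}$, evaluates the inner sup via H\"older as $\|\phi\|_p^2/(4\lambda) - \phi^T R$, and minimizes over $\lambda$. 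So your ``alternative'' is precisely the paper's proof. Your primal argument is more self-contained (no external duality theorem, no Slater check) and delivers attainment for free via the explicit construction; the dual approach is shorter once the machinery is in hand and generalizes more readily to other loss functions, which is why the paper favors it given that the same duality lemma is reused in Propositions~2 and~3.
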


Therefore, the feasible region is equivalent to
\[
\mathcal{F}_{\delta,\bar{\alpha}}\left(  n\right)  =\{\phi:\phi^{T}%
1=1,E_{P_{n}}(\phi^{T}R)\geq\bar{\alpha}+\sqrt{\delta}||\phi||_{p}\},
\]
which can now be seen as clearly convex.

Next, by fixing $E_{P}(\phi^{T} R)=\alpha\geq\bar\alpha$ in the inner
maximization portion of problem (\ref{Pp}) we obtain the following equivalent
reformulation%
\begin{equation}
\min_{\phi\in\mathcal{F}_{\delta,\bar{\alpha}}}\left\{  \max_{\alpha\geq
\bar\alpha}\left[  \max_{P\in\mathcal{U}_{\delta}(P_{n}),E_{P}(\phi^{T} R)
=\alpha}\{\phi^{T}E_{P}\left(  RR^{T}\right)  \phi\}-\alpha^{2}\right]
\right\}  . \label{Ra}%
\end{equation}
Introducing $E_{P}(\phi^{T} R)=\alpha$ is useful because the inner-most
maximization problem is now linear in $P$. So, let us concentrate on the
problem
\begin{equation}
\max\limits_{P\in\mathcal{U}_{\delta}(P_{n}),E_{P}(\phi^{T} R)=\alpha}\phi
^{T}E_{P}[RR^{T}]\phi. \label{maxpp}%
\end{equation}
%and provide means for solving it using a the following dual representation
%(proved in Appendix 6.2), which we state in terms of a general cost function
%$c\left(  \cdot\right)  $.
The following proposition solves this problem in terms of a general cost
function $c$.

\begin{proposition}
\label{p1}For an arbitrary cost function $c$ that is lower semicontinuous and
non-negative, the optimal value function of problem (\ref{maxpp}) is given by
\begin{equation}
\label{dualpp}\inf\limits_{\lambda_{1}\geq0,\lambda_{2}}\left[ \frac{1}{n}%
\sum\limits_{i=1}^{n}\Phi(R_{i})+\lambda_{1}\delta+\lambda_{2}\alpha\right] ,
\end{equation}
where
\[
\Phi(R_{i}):=\sup\limits_{u}[(\phi^{T}u)^{2}-\lambda_{1}c(u,R_{i})-\lambda
_{2}\phi^{T}u].
\]

\end{proposition}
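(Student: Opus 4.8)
The plan is to establish this via Lagrangian duality for the optimal transport problem, which is the natural framework for Wasserstein-constrained optimization. The primal problem \eqref{maxpp} is an infinite-dimensional linear program over probability measures $P$, subject to two constraints: the transport-cost constraint $D_c(P,P_n)\leq\delta$ and the linear mean constraint $E_P(\phi^T R)=\alpha$. The key observation is that by the definition of $D_c$, optimizing over $P$ with $D_c(P,P_n)\leq\delta$ is equivalent to optimizing over couplings $\pi$ whose first marginal is $P_n$; since $P_n$ is a discrete measure supported on $R_1,\dots,R_n$, such a coupling decomposes into the mass originally sitting at each $R_i$ being transported to a distribution over target points $u$.

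First I would rewrite the problem explicitly over transport plans. Writing the coupling as $\pi = \tfrac{1}{n}\sum_{i=1}^n \delta_{R_i}\otimes \pi_i$, where each $\pi_i$ is a probability measure on $\mathbb{R}^d$ describing where the mass at $R_i$ is sent, the objective becomes $\tfrac{1}{n}\sum_i \int (\phi^T u)^2\,\pi_i(du)$, the transport constraint becomes $\tfrac{1}{n}\sum_i \int c(u,R_i)\,\pi_i(du)\leq\delta$, and the mean constraint becomes $\tfrac{1}{n}\sum_i \int \phi^T u\,\pi_i(du)=\alpha$. Next I would attach a multiplier $\lambda_1\geq 0$ to the inequality transport constraint (with the sign chosen so that the relaxed problem is a maximization and $\lambda_1\geq 0$ is forced by the inequality) and a free multiplier $\lambda_2$ to the equality mean constraint, forming the Lagrangian
\[
L=\frac{1}{n}\sum_{i=1}^n\int\Big[(\phi^T u)^2-\lambda_1 c(u,R_i)-\lambda_2\phi^T u\Big]\pi_i(du)+\lambda_1\delta+\lambda_2\alpha.
\]
For fixed multipliers, the inner supremum over each $\pi_i$ (a probability measure) is attained by placing all mass at the maximizer of the integrand, yielding exactly $\Phi(R_i)=\sup_u[(\phi^T u)^2-\lambda_1 c(u,R_i)-\lambda_2\phi^T u]$. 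Taking the infimum over $\lambda_1\geq 0,\lambda_2$ then produces precisely the claimed expression \eqref{dualpp}, giving weak duality (the dual value upper-bounds the primal).

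The main obstacle, and the step requiring the most care, is proving strong duality, i.e. that no duality gap exists so that the infimum in \eqref{dualpp} equals the primal optimal value rather than merely dominating it. The clean way is to invoke a general strong-duality theorem for distributionally robust problems under optimal-transport ambiguity, as developed in the Wasserstein DRO literature; the standing hypotheses—$c$ lower semicontinuous and non-negative, together with $P_n$ having finite support—are exactly the regularity conditions under which such strong duality holds, and one should verify that the feasible region is non-empty (so that an interior-type or Slater-type condition is available when $\delta>0$, for instance by noting $P_n$ itself can be perturbed to meet the mean constraint while staying within the transport budget). I would state strong duality as a consequence of that framework and confirm the hypotheses are met here, rather than rederiving the minimax interchange from scratch; the inner unconstrained supremum defining $\Phi$ is automatically well-posed as a value in $[0,\infty]$ by lower semicontinuity of $c$, so the only genuine subtlety is ruling out the gap.
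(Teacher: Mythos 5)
Your weak-duality half is correct and is, in substance, the same argument the paper gives, just packaged differently: where you decompose the coupling as $\pi=\frac{1}{n}\sum_{i=1}^{n}\delta_{R_i}\otimes\pi_i$ and dualize the two constraints directly, the paper introduces a slack variable $S\equiv v$ to turn the budget constraint into the equality $E_{\pi}[c(U,R)+S]=\delta$, encodes the marginal constraint $\pi_R=P_n$ through the indicators $1_{r=R_i}$, and views (\ref{maxpp}) as a generalized moment problem $\max\{E_{\pi}[(U^{T}\phi)^{2}]:E_{\pi}[f(U,R,S)]=q\}$. In both routes the reduction of the inner optimization to point masses is what produces $\Phi(R_i)$, and the multipliers attached to the cost and mean constraints become $\lambda_1\geq0$ and $\lambda_2$; your conditional-measure decomposition makes this step more transparent than the paper's abstract formulation.

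The caveat concerns the strong-duality step, which you rightly call the crux but then resolve by appeal to ``a general strong-duality theorem for distributionally robust problems under optimal-transport ambiguity.'' As usually stated (Blanchet--Murthy, Gao--Kleywegt, Esfahani--Kuhn), those theorems cover $\sup\{E_P[h(R)]:D_c(P,P_n)\leq\delta\}$ with \emph{no side constraints}, and accordingly need no Slater-type hypothesis; they do not apply verbatim to (\ref{maxpp}), which carries the extra equality constraint $E_P(\phi^{T}R)=\alpha$. What is actually required---and what the paper invokes---is duality for linear programs over measures with finitely many moment constraints (Proposition 6 of Blanchet, Kang and Murthy (2016)), whose hypothesis is an interior-point condition: the paper verifies that the augmented moment vector $\tilde q$ lies in the interior of the achievable moment cone, using $\phi\neq0$. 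Your instinct that a Slater-type condition enters is the right one, but mere non-emptiness of the feasible region is not the correct condition: in the boundary case $(\alpha-\phi^{T}E_{P_n}[R])^{2}=\delta\|\phi\|_{p}^{2}$ the region is non-empty yet the interior condition fails, so what one should check is strict feasibility together with $\phi\neq0$. With the moment-problem duality cited in place of the unconstrained Wasserstein-DRO theorem, and that verification spelled out, your argument closes the gap exactly as the paper's does.
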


A proof, based on a dual argument, is given in the Appendix. Thanks to this
proposition, we are able to reduce the inner (infinite dimensional) variational
problem in (\ref{Pp}) into a two-dimensional optimization problem in terms of
$\lambda_{1}$ and $\lambda_{2}$, which can be further simplified if the cost
function $c$ has additional structure. We make this statement precise in the
case of a quadratic $l_{q}$ cost.

\begin{proposition}
\label{Prop_haq} Let $c(u,w)=||u-w||_{q}^{2}$ with $q\geq1$ and $1/p+1/q=1$.
If $(\alpha-\phi^{T}E_{P_{n}}[R])^{2}-\delta||\phi||_{p}^{2}\leq0$, then the
value of (\ref{maxpp}) is equal to
\begin{align*}
h(\alpha,\phi):=  &  E_{P_{n}}\left[  (\phi^{T}R)^{2}\right]  +2(\alpha
-\phi^{T}E_{P_{n}}[R])\phi^{T}E_{P_{n}}[R]+\delta||\phi||_{p}^{2}\\
&  +2\sqrt{\delta||\phi||_{p}^{2}-(\alpha-\phi^{T}E_{P_{n}}[R])^{2}}\sqrt
{\phi^{T}Var_{P_{n}}\left(  R\right)  \phi}.
\end{align*}

\end{proposition}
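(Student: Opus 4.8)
The plan is to evaluate the dual expression (\ref{dualpp}) of Proposition \ref{p1} explicitly for the quadratic $\ell_q$ cost $c(u,w)=||u-w||_{q}^{2}$ and then carry out the two-variable minimization over $(\lambda_{1},\lambda_{2})$ in closed form, matching the result against $h(\alpha,\phi)$.

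First I would compute $\Phi(R_i)$. Substituting $v=u-R_i$ and expanding $(\phi^{T}u)^{2}=(\phi^{T}R_i+\phi^{T}v)^{2}$, every term other than $-\lambda_{1}||v||_{q}^{2}$ depends on $v$ only through the scalar $s:=\phi^{T}v$. For a fixed value of $s$ the optimal $v$ minimizes $||v||_{q}$ subject to $\phi^{T}v=s$, and by the sharp H\"older inequality (the duality between the $\ell_p$ and $\ell_q$ norms with $1/p+1/q=1$) this minimum equals $|s|/||\phi||_{p}$. This collapses the $d$-dimensional supremum into a one-dimensional concave quadratic in $s$ whose leading coefficient is $1-\lambda_{1}/||\phi||_{p}^{2}$; the supremum is finite precisely when $\lambda_{1}>||\phi||_{p}^{2}$, and maximizing the quadratic yields an explicit formula for $\Phi(R_i)$ in terms of $\phi^{T}R_i$, $\lambda_{1}$, and $\lambda_{2}$.

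Next I would substitute this back into (\ref{dualpp}) and average over $i$, rewriting the resulting objective using the empirical moments $E_{P_{n}}[(\phi^{T}R)^{2}]$, $\phi^{T}E_{P_{n}}[R]$, and $\phi^{T}Var_{P_{n}}(R)\phi=E_{P_{n}}[(\phi^{T}R)^{2}]-(\phi^{T}E_{P_{n}}[R])^{2}$. I would then minimize over $\lambda_{2}\in\mathbb{R}$, which is an unconstrained convex quadratic with an interior minimizer, and substitute the optimal $\lambda_{2}$. After reparametrizing by $\nu:=\lambda_{1}/||\phi||_{p}^{2}-1>0$ and collecting terms, the dependence on $\nu$ reduces to $\phi^{T}Var_{P_{n}}(R)\phi/\nu+\nu\big(\delta||\phi||_{p}^{2}-(\alpha-\phi^{T}E_{P_{n}}[R])^{2}\big)$ plus a $\nu$-free remainder that is already one of the summands of $h(\alpha,\phi)$.

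Finally I would minimize over $\nu>0$ by the arithmetic--geometric mean inequality, and this is exactly where the hypothesis $(\alpha-\phi^{T}E_{P_{n}}[R])^{2}-\delta||\phi||_{p}^{2}\leq 0$ is used: it guarantees the coefficient of $\nu$ is nonnegative, so the infimum is attained at the interior point $\nu^{\ast}=\sqrt{\phi^{T}Var_{P_{n}}(R)\phi}/\sqrt{\delta||\phi||_{p}^{2}-(\alpha-\phi^{T}E_{P_{n}}[R])^{2}}$ and equals twice the geometric mean, reproducing precisely the square-root term in $h(\alpha,\phi)$. The main obstacle is the first step --- correctly reducing the $d$-dimensional inner supremum to one dimension via dual-norm duality while tracking the feasibility constraint $\lambda_{1}>||\phi||_{p}^{2}$ --- together with recognizing that the stated hypothesis is exactly the condition under which the AM-GM optimizer $\nu^{\ast}$ is interior (equivalently, by Proposition \ref{prop1}, the condition that the constraint set $\{P\in\mathcal{U}_{\delta}(P_{n}):E_{P}(\phi^{T}R)=\alpha\}$ is nonempty). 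Collecting all pieces then yields the claimed value of (\ref{maxpp}).
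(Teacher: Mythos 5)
Your proposal is correct and takes essentially the same route as the paper's own proof: it invokes the duality of Proposition \ref{p1}, evaluates $\Phi(R_i)$ in closed form via the $l_p$--$l_q$ dual-norm reduction (the paper parametrizes by $\|\Delta\|_q$ while you parametrize by $s=\phi^{T}v$, but these are the same H\"older argument) with the same finiteness threshold $\lambda_{1}>\|\phi\|_{p}^{2}$, then eliminates $\lambda_{2}$ by the first-order condition and minimizes over the shifted multiplier ($\kappa=\lambda_{1}-\|\phi\|_{p}^{2}$ in the paper, your rescaled $\nu=\kappa/\|\phi\|_{p}^{2}$) by AM--GM, using the hypothesis $(\alpha-\phi^{T}E_{P_{n}}[R])^{2}\leq\delta\|\phi\|_{p}^{2}$ exactly where the paper does. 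No gaps beyond the same boundary-case glosses the paper itself makes.
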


Again, a proof of this proposition is in the Appendix. The condition
$(\alpha-\phi^{T}E_{P_{n}}[R])^{2}-\delta||\phi||_{p}^{2}\leq0$ is to make
sure that (\ref{maxpp}) is feasible, failing which the optimal value
$h(\alpha,\phi)=-\infty$. Proposition \ref{Prop_haq} ultimately leads to the
following main result of the paper, one that transforms (\ref{Pp}) into a
non-robust portfolio selection problem in terms of the empirical measure
$P_{n}$.

\begin{theorem}
\label{ThmMain1}The primal formulation given in (\ref{Pp}) is equivalent to
the following dual problem
\begin{equation}
\min_{\phi\in\mathcal{F}_{\delta,\bar{\alpha}}\left(  n\right)  } \left(
\sqrt{\phi^{T}Var_{P_{n}}\left(  R\right)  \phi}+\sqrt{\delta}||\phi
||_{p}\right)  ^{2}, \label{Ppp}%
\end{equation}
in the sense that the two problems have the same optimal solutions and optimal
value.
\end{theorem}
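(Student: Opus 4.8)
The plan is to start from the reformulation (\ref{Ra}), which has already converted the maximization of the variance into a two-stage problem: an inner-most \emph{linear} maximization over $P$ for a fixed mean $\alpha=E_{P}(\phi^{T}R)$, followed by a one-dimensional maximization over $\alpha\geq\bar\alpha$. I would then invoke Proposition \ref{Prop_haq} to replace the inner-most maximization (\ref{maxpp}) by its closed-form value $h(\alpha,\phi)$, which is valid precisely when $(\alpha-\phi^{T}E_{P_{n}}[R])^{2}\leq\delta||\phi||_{p}^{2}$ (the value being $-\infty$ otherwise, so those $\alpha$ are irrelevant to a maximization). This reduces the whole nested problem to maximizing $h(\alpha,\phi)-\alpha^{2}$ over the admissible $\alpha\geq\bar\alpha$.

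The heart of the argument is a short algebraic simplification. Writing $m=\phi^{T}E_{P_{n}}[R]$, $s=\sqrt{\phi^{T}Var_{P_{n}}(R)\phi}$ and $D=\sqrt{\delta}||\phi||_{p}$, and using $E_{P_{n}}[(\phi^{T}R)^{2}]=s^{2}+m^{2}$, I would expand $h(\alpha,\phi)-\alpha^{2}$ and observe that the terms reorganize into $s^{2}+D^{2}-(\alpha-m)^{2}+2s\sqrt{D^{2}-(\alpha-m)^{2}}$. Substituting $w=\sqrt{D^{2}-(\alpha-m)^{2}}$, which ranges over $[0,D]$ on the feasible set of Proposition \ref{Prop_haq}, collapses this expression to the perfect square $(s+w)^{2}$. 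Since $s\geq0$, the quantity is increasing in $w$, so the maximum over the admissible range is attained at $w=D$, i.e. at $(\alpha-m)^{2}=0$, giving $(s+D)^{2}=\left(\sqrt{\phi^{T}Var_{P_{n}}(R)\phi}+\sqrt{\delta}||\phi||_{p}\right)^{2}$.

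The one step requiring genuine care is checking that this optimal choice $\alpha=m=\phi^{T}E_{P_{n}}[R]$ actually respects the outer constraint $\alpha\geq\bar\alpha$ in (\ref{Ra}); otherwise the true maximizer could be pushed to the boundary $\alpha=\bar\alpha$ and the clean square would be lost. Here I would lean on the feasibility description from Proposition \ref{prop1}: membership $\phi\in\mathcal{F}_{\delta,\bar\alpha}(n)$ is exactly $E_{P_{n}}(\phi^{T}R)\geq\bar\alpha+\sqrt{\delta}||\phi||_{p}$, which yields $m\geq\bar\alpha+D\geq\bar\alpha$. Thus the unconstrained maximizer $\alpha=m$ is feasible, and no boundary case intervenes. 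Having evaluated the inner two-stage maximization to $(s+D)^{2}$ for every feasible $\phi$, the outer minimization over $\phi\in\mathcal{F}_{\delta,\bar\alpha}(n)$ is literally (\ref{Ppp}); and since every manipulation is an equality (the passage from (\ref{Pp}) to (\ref{Ra}) and the value substitutions of Propositions \ref{prop1} and \ref{Prop_haq}), the two problems share the same optimal value and the same optimizers. The main obstacle I anticipate is not the computation but precisely this interaction between the $-\infty$ branch of $h$ and the constraint $\alpha\geq\bar\alpha$ — ensuring the feasible-$\alpha$ set is nonempty and contains the unconstrained optimum — which is exactly what the Proposition \ref{prop1} bound secures.
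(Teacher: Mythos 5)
Your proposal is correct and follows essentially the same route as the paper's own proof: invoke Proposition \ref{Prop_haq} inside the reformulation (\ref{Ra}), collapse $h(\alpha,\phi)-\alpha^{2}$ into the perfect square $\left(\sqrt{\phi^{T}Var_{P_{n}}(R)\phi}+\sqrt{\delta\|\phi\|_{p}^{2}-(\alpha-\phi^{T}E_{P_{n}}[R])^{2}}\right)^{2}$, and maximize over $\alpha$ at $\alpha_{opt}=\phi^{T}E_{P_{n}}[R]$. Your explicit check that $\alpha_{opt}\geq\bar{\alpha}$ via the Proposition \ref{prop1} description of $\mathcal{F}_{\delta,\bar{\alpha}}(n)$ is a welcome detail that the paper asserts but does not spell out.
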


\begin{proof}
Note that
\begin{align*}
&  h(\alpha,\phi)-\alpha^{2}\\
&  =E_{P_{n}}\left[  (\phi^{T}R)^{2}\right]  +2(\alpha-\phi^{T}E_{P_{n}%
}[R])\phi^{T}E_{P_{n}}[R]-\alpha^{2}+\delta||\phi||_{p}^{2}\\
&  +2\sqrt{\delta||\phi||_{p}^{2}-(\alpha-\phi^{T}E_{P_{n}}[R])^{2}}\sqrt
{\phi^{T}Var_{P_{n}}\left(  R\right)  \phi}\\
&  =E_{P_{n}}\left[  (\phi^{T}R)^{2}\right]  +2\alpha\phi^{T}E_{P_{n}%
}[R]-(\phi^{T}E_{P_{n}}[R])^{2}-\alpha^{2}-(\phi^{T}E_{P_{n}}[R])^{2}%
+\delta||\phi||_{p}^{2}\\
&  +2\sqrt{\delta||\phi||_{p}^{2}-(\alpha-\phi^{T}E_{P_{n}}[R])^{2}}\sqrt
{\phi^{T}Var_{P_{n}}\left(  R\right)  \phi}\\
&  =\phi^{T}Var_{P_{n}}\left(  R\right)  \phi+\{\delta||\phi||_{p}^{2}%
-(\alpha-\phi^{T}E_{P_{n}}[R])^{2}\}\\
&  +2\sqrt{\delta||\phi||_{p}^{2}-(\alpha-\phi^{T}E_{P_{n}}[R])^{2}}\sqrt
{\phi^{T}Var_{P_{n}}\left(  R\right)  \phi}\\
&  =\left(  \sqrt{\phi^{T}Var_{P_{n}}\left(  R\right)  \phi}+\sqrt
{\delta||\phi||_{p}^{2}-(\alpha-\phi^{T}E_{P_{n}}[R])^{2}}\right)  ^{2}.
\end{align*}
Therefore, it follows from Proposition \ref{Prop_haq} that
\[
\max_{\alpha\geq\bar\alpha, (\alpha-\phi^{T}E_{P_{n}}[R])^{2}-\delta
||\phi||_{p}^{2}\leq0}\left[ h(\alpha,\phi) -\alpha^{2}\right]  =\left(
\sqrt{\phi^{T}Var_{P_{n}}\left(  R\right)  \phi}+\sqrt{\delta}||\phi
||_{p}\right)  ^{2},
\]
with the optimal $\alpha_{opt}=\phi^{T}E_{P_{n}}[R]\geq\bar\alpha$. This
concludes the result.
\end{proof}

Because the mapping $\phi\rightarrow\phi^{T}Var_{P}\left(  R\right)  \phi$ is
convex and the feasible region $\mathcal{F}_{\delta,\bar{\alpha}}\left(
n\right)  $ is convex, (\ref{Ppp}) and therefore (\ref{Pp}) are both convex
optimization problems. As such, they are tractable optimization problems.
%We say that problem
%(\ref{Ppp}) is tractable in the sense that it admits a dual formulation which
%involves a routine type of convex optimization problem.
Moreover, our approach justifies the regularization technique in
mean--variance portfolio selection often adopted in practical settings.

\section{Choice of Model Parameters}

There are two key parameters, $\delta$ and $\bar{\alpha}$, in the formulation
(\ref{Pp}), the choice of which is not only curious in theory, but also
crucial in practical implementation and for the success of our algorithm. The
idea is that the choice of these parameters should be informed by the data
(i.e. in a data-driven way) based on some statistical principles, rather than
arbitrarily exogenous. Specifically, we define the distributional uncertainty
region just large enough so that the correct optimal portfolio (the one which
we would apply if the underlying distribution was known) becomes a plausible
choice with a sufficiently high confidence level.

We need to impose several statistical assumptions.

\bigskip

\textbf{A1)} The underlying return time series $(R_{k}:k\geq0)$
is a stationary, ergodic process satisfying $E_{P^{\ast}}\left(  ||R_{k}%
||_{2}^{4}\right)  <\infty$ for each $k\geq0$. Moreover, for each measurable
$g\left(  \cdot\right)  $ such that $\left\vert g\left(  x\right)  \right\vert
\leq c(1+\left\Vert x\right\Vert _{2}^{2})$ for some $c>0$, the limit
\[
\Upsilon_{g}:=\lim_{n\rightarrow\infty}Var_{P^{\ast}}\left(  n^{-1/2}%
\sum_{k=1}^{n}g\left(  R_{k}\right)  \right)
\]
exists and
\[
n^{1/2}\left[  E_{P_{n}}\left(  g\left(  R\right)  \right)  -E_{P^{\ast}%
}\left(  g\left(  R\right)  \right)  \right]  \Rightarrow N\left(
0,\Upsilon_{g}\right)  ,
\]
where (and from that time on) ``$\Rightarrow$" denotes weak convergence.\newline

\bigskip

\textbf{A2)} For any matrix $\Lambda\in\mathbb{R}^{d\times d}$ and any vector
$\zeta\in\mathbb{R}^{d}$ such that either $\Lambda\neq0$ or $\zeta\neq0$,
\[
P^{\ast}\left(  \left\Vert \Lambda R+\zeta\right\Vert _{2}>0\right)  >0.
\]
\bigskip

\bigskip

\textbf{A3) } The classical model (\ref{Op}) has a unique solution $\phi
^{\ast}$. Moreover, $Var_{P^{\ast}}\left[  E_{P^{\ast}}\left(  R^{T}\right)
R\right]  >0.$

\bigskip

Assumption A1)
%is imposed because we will estimate covariances empirically and
%we are interested in the use of Central Limit Theorem for quadratic
%functionals (finite variances of quadratic functions imply the existence of
%fourth moments). This assumption, we believe,
is standard for most time series models (after removing seasonality).
Assumption A2) can be easily checked assuming, for example, that $R$ has a
density. Assumption A3) is a technical assumption which can be relaxed, but
then the evaluation of the optimal choice of $\delta$ would become more
cumbersome, as we shall explain.

\subsection{Choice of $\delta$}

The choice of the uncertainty size $\delta$ is crucial. If $\delta$ is too large, then
there is too much model ambiguity and the available data becomes less relevant.
In this case the resulting optimal portfolios will tend to be just equal allocations.
If $\delta$ is too small, then the effect of
robustification will be negligible. Therefore, the choice of $\delta$ should {\it not} be exogenously
specified; rather it should be endogenously informed by the data.

Theorem \ref{ThmMain1} actually suggests
an appropriate order of
% asymptotic lower
%bound on a workable choice of
$\delta=\delta_{n}$ (here $n$ is the size of the
available return time series data) in terms of $n$. Because the differences between the
optimal value obtained by solving (\ref{Op}) and that obtained by solving
the empirical version of (\ref{Op}) are of order $O\left(  n^{-1/2}\right) $, it
follows from Theorem \ref{ThmMain1} that any choice of $\delta_{n}$ in the
order of $o\left(  n^{-1}\right) $ would be too small. Hence, an ``optimal" order of
$\delta_{n}$ should be $O\left(  n^{-1}\right) $.

In order to choose an appropriate $\delta_{n}$, here we follow the idea behind
the RWPI approach introduced in
Blanchet, Kang and Murthy (2016).

Intuitively, the set $\mathcal{U}_{\delta}(P_{n})=\{P:D_{c}(P,P_{n})\leq
\delta\}$ contains all the probability measures that are plausible variations
of the data represented by $P_{n}$. Denote by $\mathcal{Q}\left(  P\right)  $
the Markowitz portfolio selection problem with target return $\rho$ assuming
that $P$ is the underlying model:
\begin{align}
&  \min\limits_{1^{T}\phi=1}\ \ \phi^{T}E_{P}[RR^{T}]\phi\label{Q_p_problem}\\
&  \left.  s.t\ \ \phi^{T}E_{P}[R]=\rho,\right. \nonumber
\end{align}
and by $\phi_{P}$ a solution to $\mathcal{Q}\left(  P\right)  $ and $\Phi_{P}$
the set of all such solutions. According to Assumption A3) we have
$\Phi_{P^{\ast}}=\left\{  \phi^{\ast}\right\} $ for some portfolio $\phi
^{\ast}$. Therefore there exist (unique) Lagrange multipliers $\lambda_{1}^{\ast}$
and $\lambda_{2}^{\ast}$ such that
\begin{align}
2E_{P^{\ast}}(RR^{T})\phi^{\ast}-\lambda_{1}^{\ast}E_{P^{\ast}}[R]-\lambda
_{2}^{\ast}1  &  =0,\label{LagStar}\\
(\phi^{\ast})^{T}E_{P^{*}}[R]-\rho &  =0.\nonumber
\end{align}

Because $P_{n}\Rightarrow P_{\ast}$ under mild assumptions (e.g. A1)), it is
reasonable to estimate $\phi^{\ast}$ by computing $\phi_{P_{n}}$. Now, when
$\delta$ is suitably chosen, since $\mathcal{U}_{\delta}(P_{n})$ constitutes
the models that are plausible variations of $P_{n}$, any $\phi_{P}$ with
$P\in\mathcal{U}_{\delta}(P_{n})$ is a plausible estimate of $\phi^{\ast}$.
This intuition motivates the definition of the following set
\[
\Lambda_{\delta}(P_{n})=\cup_{P\in\mathcal{U}_{\delta}(P_{n})}\Phi_{P},
\]
which corresponds to all the plausible estimates of $\phi^{\ast}$. As a
result, $\Lambda_{\delta}(P_{n})$ is a natural confidence region for
$\phi^{\ast}$ and, therefore, $\delta$ should be chosen as the smallest number
$\delta_{n}^{\ast}$ such that $\phi^{\ast}$ belongs to this region with a
given confidence level. Namely,
\[
\delta_{n}^{\ast}=\min\{\delta:P^{*}\left(  \phi^{\ast}\in\Lambda_{\delta
}(P_{n})\right)  \geq1-\delta_{0}\},
\]
where $1-\delta_{0}$ is a user-defined confidence level (typically 95\%).

However, by the mere definition, it is difficult to compute $\delta_{n}^{\ast}$. We now
provide a simpler representation for $\delta_{n}^{\ast}$ via an auxiliary
function called the robust Wasserstein profile (RWP) function. To this end,
first observe that any $\phi\in\Lambda_{\delta}\left(  P_{n}\right)  $ if and
only if there exist $P\in\mathcal{U}_{\delta}(P_{n})$ and $\lambda_{1}%
,\lambda_{2}\in\left(  -\infty,\infty\right)  $ such that%
\begin{align*}
2E_{P}(RR^{T})\phi-\lambda_{1}E_{P}[R]-\lambda_{2}1  &  =0,\\
\phi^T E_{P}(R)-\rho  &  =0.
\end{align*}
From these two equations, multiplying the first equation by $\phi$,
substituting the expression in the second equation and noting that $\phi
\cdot1=1$, we obtain
\[
\lambda_{2}=2\left(  \phi\right)  ^{T}E_{P}(RR^{T})\phi-\lambda_{1}\rho.
\]

We now define the following RWP function
%Let $\Sigma\in\mathbb{R}^{d\times d}$ be a symmetric positive
%definite matrix (we write $\Sigma\succeq0$) and $\mu\in\mathbb{R}^{d}$. Then,
%define%
\[
\mathcal{\bar{R}}_{n}(\phi,\lambda_{1},\Sigma,\mu):=\inf\left\{ D_{c}%
(P,P_{n}):%
\begin{cases}
2\Sigma\phi-\lambda_{1}\mu=\left(  2\left(  \phi\right)  ^{T}\Sigma
\phi-\lambda_{1}\mu\cdot\phi\right)  1\\
\mu=E_{P}[R],\Sigma=E_{P}(RR^{T})
\end{cases}
\right\} ,
\]
for $(\phi,\lambda_{1},\Sigma,\mu)\in\mathbb{R}^{d}\times\mathbb{R}%
\times\mathcal{S}^{d\times d}_{+}\times\mathbb{R}^{d}$ where $\mathcal{S}%
^{d\times d}_{+}$ is the set of all the symmetric positive semidefinite
matrices and we convent that $\inf\emptyset:=+\infty$. Moreover, we define
\[
\mathcal{\bar{R}}_{n}^{\ast}(\phi^{\ast}):=\min_{\Sigma\in\mathcal{S}^{d\times
d}_{+} ,\mu\in\mathbb{R}^{d},\lambda_{1}\in\mathbb{R}}\mathcal{\bar{R}}%
_{n}(\phi^{\ast},\lambda_{1},\Sigma,\mu).
\]
It follows directly from the definitions that
\[
\phi^{\ast}\in\Lambda_{\delta}\left(  P_{n}\right)  \Longleftrightarrow
\mathcal{\bar{R}}_{n}^{\ast}(\phi^{\ast})\leq\delta.
\]
Therefore
\[
\delta_{n}^{\ast}=\inf\{\delta:P^{\ast}(\mathcal{\bar{R}}_{n}^{\ast}%
(\phi^{\ast})\leq\delta)\geq1-\delta_{0}\}.
\]
In other words, $\delta_{n}^{\ast}$ is the quantile corresponding to the
$1-\delta_{0}$ percentile of the distribution of $\mathcal{\bar{R}}_{n}^{\ast
}(\phi^{\ast})$.\footnote{Herein the analysis is under Assumption A3). If
$\Phi_{P^{\ast}}$ contained more than just one element, then there would be
several possible options to formulate an optimization problem for choosing
$\delta$. For example, we may choose $\delta$ as the smallest uncertainty size
such that $\Phi_{P^{\ast}}\subset\Lambda_{\delta}\left(  P_{n}\right)  $ with
probability $1-\delta_{0}$, in which case we would need to study $\sup
_{\phi^{\ast}\in\Phi_{P^{\ast}}}\mathcal{\bar{R}}_{n}^{\ast}(\phi^{\ast})$.}
%The analysis
%presented here can be adapted to this case in a straightforward way, at least for a model with  short-selling prohibition (as in that case $\Phi_{P^{\ast}%
%}$ becomes a compact set).

Still, even under A3), the statistic $\mathcal{\bar{R}}_{n}^{\ast}(\phi^{\ast
})$ is somewhat cumbersome to work with as it is derived from solving a
minimization problem in terms of the mean and variance. So, instead, we will
define an alternative statistic involving only the empirical mean and variance
while producing an upper bound which still preserves the target rate of
convergence to zero as $n\rightarrow\infty$ (which, as we have argued, should
be of order $O\left(  n^{-1}\right)  $).

Denote $\Sigma_{n}=E_{P_{n}}\left(  RR^{T}\right)  $, and let $\lambda
_{1}^{\ast}$ be the Lagrange multiplier in (\ref{LagStar}). Set
\begin{equation}
\mu_{n}=\rho1+2\left(  \Sigma_{n}\phi^{\ast}-\phi^{\ast T}\Sigma_{n}\phi
^{\ast}1\right)  /\lambda_{1}^{\ast}.
\end{equation}

Define
\[
\mathcal{R}_{n}(\Sigma_{n},\mu_{n}):=\mathcal{\bar{R}}_{n}(\phi^{\ast}%
,\lambda_{1}^{\ast},\Sigma_{n},\mu_{n}).
\]
It is clear that
\[
\mathcal{R}_{n}(\Sigma_{n},\mu_{n})\geq\mathcal{\bar{R}}_{n}^{\ast}(\phi
^{\ast}).
\]
Therefore,
\[
\mathcal{R}_{n}(\Sigma_{n},\mu_{n})\leq\delta\Longrightarrow\mathcal{\bar{R}%
}_{n}^{\ast}(\phi^{\ast})\leq\delta\
\]
and, consequently,
\[
\bar{\delta}_{n}^{\ast}=\inf\{\delta\geq0:P^{\ast}\left(  \mathcal{R}%
_{n}(\Sigma_{n},\mu_{n})\leq\delta\right)  \geq1-\delta_{0}\}\geq\delta
_{n}^{\ast}.
\]

Moreover, because of the choice of $\Sigma_{n}$ and $\mu_{n}$,
%if $P$ is such
%that $E_{P}[RR^{T}]=\Sigma_{n}$ and $E_{P}[R]=\mu_{n}$, then $\phi^{\ast}$
%must be the optimal solution $\mathcal{Q}\left(  P\right)  $.
we have
\[
\mathcal{R}_{n}(\Sigma_{n},\mu_{n})=\inf\{\mathcal{D}_{c}(P,P_{n}%
):E_{P}[RR^{T}]=\Sigma_{n},E_{P}[R]=\mu_{n}\}.
\]
The next result shows $\bar{\delta}_{n}^{\ast}=O\left(n^{-1}\right)  $ as
$n\rightarrow\infty$.

\begin{theorem}
\label{ThmMain2} Assume A1) and A2) hold and write $\mu_{\ast}=E_{P^{\ast}%
}\left(  R\right)  $ and $\Sigma_{\ast}=E_{P^{\ast}}\left(  RR^{T}\right)  $.
Define
%$g_{1}(x):=xx^{T}$, $g_{2}(x):=x$, and%
\[
g\left(  x\right)  =xx^{T} +2\left(  x \cdot\phi^{\ast}-\phi^{\ast T}x
\phi^{\ast}1\right)  /\lambda_{1}^{\ast}.
\]
Then%
\[
nR_{n}(\Sigma_{n},\mu_{n})\Rightarrow L_{0}:=\sup_{\bar{\lambda}\in
\mathbb{R}^{d}}\left( \bar{\lambda}^{T}Z-\inf_{\bar{\Lambda}\in\mathbb{R}%
^{d\times d}}E_{P^{\ast}}[\left\Vert \bar{\Lambda}R+\bar{\lambda}\right\Vert
_{p}^{2}]\right)
\]
where $Z\sim N\left(  0,\Upsilon_{g}\right)  $. Moreover, if $p=2$ then
\[
L_{0}=\left\Vert Z\right\Vert _{2}^{2}/\left(  1-\left\Vert \mu_{\ast
}\right\Vert _{2}^{4}/\mu_{\ast}^{T}\Sigma_{\ast}\mu_{\ast}\right)  .
\]

\end{theorem}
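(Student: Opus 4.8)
The plan is to follow the robust Wasserstein profile inference (RWPI) program, specialized to the two moment constraints $E_P[RR^T]=\Sigma_n$ and $E_P[R]=\mu_n$ that define $\mathcal{R}_n(\Sigma_n,\mu_n)$. First I would invoke the optimal-transport strong duality that already underlies Proposition \ref{p1}: dualizing the two (linear-in-$P$) moment constraints with a symmetric matrix multiplier $\Lambda$ and a vector multiplier $\lambda$ reduces the infinite-dimensional profile problem to the finite-dimensional concave program
\[
\mathcal{R}_n=\sup_{\Lambda,\lambda}\Big\{-\tfrac1n\sum_{i=1}^n\sup_u\big[\langle\Lambda,uu^T-\Sigma_n\rangle+\lambda^T(u-\mu_n)-\|u-R_i\|_q^2\big]\Big\},
\]
in exact analogy with the derivation of (\ref{dualpp}); the inner supremum is an explicit Legendre transform of the $l_q$-cost.

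Second, I would pin down the scaling. Since $\Sigma_n=E_{P_n}[RR^T]$ holds \emph{exactly} and, by the optimality relation (\ref{LagStar}), $\mu_*$ solves the population analogue, the only binding mismatch is $\mu_n-E_{P_n}[R]$, which A1) forces to be $O_p(n^{-1/2})$; hence the optimal multipliers are $O(n^{-1/2})$ and I would substitute $\Lambda=\bar\Lambda/\sqrt n$, $\lambda=\bar\lambda/\sqrt n$. Expanding $u=R_i+\Delta$ and using that the convex conjugate of $\|\cdot\|_q^2$ is a multiple of $\|\cdot\|_p^2$, the inner supremum becomes, to leading order, $\langle\Lambda,R_iR_i^T-\Sigma_n\rangle+\lambda^T(R_i-\mu_n)$ plus a penalty of the form $\|2\Lambda R_i+\lambda\|_p^2$, with a cubic-in-$\Delta$ remainder that is asymptotically negligible. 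Averaging over $i$, the matrix-constraint linear term vanishes identically because $\tfrac1n\sum_i(R_iR_i^T-\Sigma_n)=0$, so after multiplying by $n$ the only surviving linear term is $\sqrt n\,\lambda^T(\mu_n-E_{P_n}[R])$. The CLT in A1) then gives $\sqrt n(\mu_n-E_{P_n}[R])\Rightarrow Z\sim N(0,\Upsilon_g)$ with $g$ the combined estimating function read off from (\ref{LagStar}), while the law of large numbers replaces $E_{P_n}$ by $E_{P^*}$ in the quadratic penalty; maximizing the penalty term over $\bar\Lambda$ (after absorbing the fixed constants into the scaling) produces the stated $\inf_{\bar\Lambda}E_{P^*}[\|\bar\Lambda R+\bar\lambda\|_p^2]$, leaving $nR_n(\Sigma_n,\mu_n)\Rightarrow L_0$.

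The hard part will be turning these heuristics into a rigorous weak-convergence statement, i.e.\ justifying the interchange of $\lim_n$ with the outer $\sup_{\bar\lambda}$ and the inner $\inf_{\bar\Lambda}$. I would establish uniform (on compacts) convergence of the rescaled dual objective to the limiting concave functional, together with a coercivity/tightness bound confining the near-optimal $(\bar\Lambda,\bar\lambda)$ to a compact set, so that the argmax converges (via an epi-convergence or continuous-mapping-for-argmax argument). Coercivity in $\bar\lambda$ is exactly where A2) enters: it guarantees $E_{P^*}[\|\bar\Lambda R+\bar\lambda\|_p^2]>0$ whenever $(\bar\Lambda,\bar\lambda)\neq0$, so the penalty dominates the linear term $\bar\lambda^TZ$ at infinity and the supremum is attained; A1)'s fourth-moment bound is what controls both the CLT and the quadratic penalty (and the discarded cubic remainder) uniformly, and A3) ensures $\phi^*$, $\lambda_1^*$, and hence $g$, are well defined. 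I expect this tightness-plus-uniform-remainder-control step to be the main technical obstacle.

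Finally, for the special case $p=2$ I would evaluate the limit explicitly. The inner problem $\inf_{\bar\Lambda}E_{P^*}[\|\bar\Lambda R+\bar\lambda\|_2^2]$ is a linear least-squares problem — regressing the constant vector $-\bar\lambda$ on $R$ in $L^2(P^*)$ — whose minimum is a positive multiple of $\|\bar\lambda\|_2^2$, the multiple being $1-\|\mu_*\|_2^4/(\mu_*^T\Sigma_*\mu_*)$. Substituting this into the outer problem, $\sup_{\bar\lambda}\{\bar\lambda^TZ-(1-\|\mu_*\|_2^4/(\mu_*^T\Sigma_*\mu_*))\|\bar\lambda\|_2^2\}$ is an elementary concave quadratic maximization whose optimal value is $\|Z\|_2^2$ divided by that same factor, giving the claimed closed form.
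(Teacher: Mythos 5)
Your treatment of the first (general $p$) claim follows the paper's own proof essentially step for step: the same RWPI duality (Proposition 1 of Blanchet, Kang and Murthy (2016)) reducing $\mathcal{R}_n(\Sigma_n,\mu_n)$ to a finite-dimensional sup over $(\Lambda,\lambda)$; the same observation that the matrix-constraint linear term vanishes identically because $\Sigma_n=E_{P_n}(RR^T)$ holds exactly; the same $n^{-1/2}$ scaling of the multipliers, with $l_q$-conjugacy producing the $\left\Vert \bar{\Lambda}R+\bar{\lambda}\right\Vert_p^2$ penalty and the quadratic-in-$\Delta$ term declared negligible, under A2), by the compactness technique that the paper also borrows from the proof of Proposition 3 in Blanchet, Kang and Murthy (2016); and the same identification of $Z$ from the fluctuation $\sqrt{n}\,(\mu_n-E_{P_n}(R))$ (the paper is more explicit here, splitting $Z=Z_0+Z_1$ via $n^{1/2}(\Sigma_n-\Sigma_*)\Rightarrow Y$, but your ``combined estimating function'' is the same idea). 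Up to this point your proposal and the paper are at the same level of rigor.

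The genuine gap is in your closing $p=2$ paragraph. You correctly recognize $\inf_{\bar{\Lambda}}E_{P^{\ast}}[\left\Vert \bar{\Lambda}R+\bar{\lambda}\right\Vert_2^2]$ as a row-by-row least-squares problem, but the value you then assert is not the value of that problem. The first-order condition for row $i$ is $\Sigma_{\ast}\bar{\Lambda}_{i\cdot}=-\bar{\lambda}_i\mu_{\ast}$, whose unique solution (A2) makes $\Sigma_{\ast}$ positive definite, hence invertible) is $\bar{\Lambda}_{i\cdot}=-\bar{\lambda}_i\Sigma_{\ast}^{-1}\mu_{\ast}$, giving the minimum $\left\Vert \bar{\lambda}\right\Vert_2^2\left(1-\mu_{\ast}^T\Sigma_{\ast}^{-1}\mu_{\ast}\right)$ --- not $\left\Vert \bar{\lambda}\right\Vert_2^2\left(1-\left\Vert \mu_{\ast}\right\Vert_2^4/\mu_{\ast}^T\Sigma_{\ast}\mu_{\ast}\right)$. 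The paper's appendix arrives at the latter constant by imposing the ansatz $\bar{\Lambda}_{i\cdot}=a_i\mu_{\ast}$, which satisfies the first-order condition only when $\mu_{\ast}$ is an eigenvector of $\Sigma_{\ast}$; by Cauchy--Schwarz, $1-\left\Vert \mu_{\ast}\right\Vert_2^4/\mu_{\ast}^T\Sigma_{\ast}\mu_{\ast}\geq 1-\mu_{\ast}^T\Sigma_{\ast}^{-1}\mu_{\ast}$, with equality exactly in that eigenvector case. So if you actually carry out the regression you propose, you obtain a closed form different from the one you are trying to prove; quoting the theorem's constant as the output of your regression is the step that fails. A secondary arithmetic point: $\sup_{\bar{\lambda}}\{\bar{\lambda}^TZ-c\left\Vert \bar{\lambda}\right\Vert_2^2\}=\left\Vert Z\right\Vert_2^2/(4c)$, not $\left\Vert Z\right\Vert_2^2/c$; this factor of $4$ appears in the paper's appendix but is dropped both in the theorem statement and in your final sentence. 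Neither issue touches the general-$p$ weak-convergence claim, but the $p=2$ closed form cannot be established by the argument as you have written it.
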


A proof of Theorem \ref{ThmMain2} is provided in the Appendix.

Note that $L_0$ has an explicit expression when $p=2$. When $p\neq 2$,
using the inequalities that $||x||_{p}^{2}%
\geq||x||_{2}^{2}$ if $p<2$ and $d^{(\frac{1}{2}-\frac{1}{p})}||x||_{p}%
^{2}\geq||x||_{2}^{2}$ if $p>2$, we can find
a stochastic upper
bound of $L_0$ that can be
explicitly expressed. In that case we can obtain $\bar{\delta}_{n}^{\ast}$ in
exactly the same way.

\subsection{Choice of $\bar{\alpha}$\label{Sec_Choice_Alpha}}

Once $\delta$ has been chosen, the next step is to choose $\bar{\alpha}$.
The idea is to select $\bar{\alpha}$ just large enough to make sure that we do
not rule out that $\phi^{\ast}\in\mathcal{F}_{\delta,\bar{\alpha}}\left(
n\right)  $ holds with a given confidence level chosen by the user, where
$\phi^{\ast}$ is the optimal solution to (\ref{Op}). It is equivalent to
choose $\upsilon_{0}$ where
\[
\bar{\alpha}=\rho-\sqrt{\delta}\left\Vert \phi^{\ast}\right\Vert _{{p}%
}\upsilon_{0}.\text{ }%
\]

%%%
Therefore, it follows from Proposition \ref{prop1} that $\phi^{\ast}\in
\mathcal{F}_{\delta,\bar{\alpha}}\left(  n\right)  $ if and only if
\[
\left(  \phi^{\ast}\right)  ^{T}E_{P_{n}}\left(  R\right)  -\sqrt{\delta
}\left\Vert \phi^{\ast}\right\Vert _{{p}}\geq\rho-\sqrt{\delta}\left\Vert
\phi^{\ast}\right\Vert _{{p}}\upsilon_{0}.
\]
However, $\rho=\left(  \phi^{\ast}\right)  ^{T}E_{P^{\ast}}\left(  R\right)
$; so the previous inequality holds if and only if%
\begin{equation}
\left(  \phi^{\ast}\right)  ^{T}\left(  E_{P_{n}}\left(  R\right)
-E_{P^{\ast}}\left(  R\right)  \right)  \geq \left\Vert \phi^{\ast}\right\Vert
_{{p}}\sqrt{\delta}\left(1-  \upsilon_{0}\right)  .\label{Bnd_2abc}%
\end{equation}
Therefore, we can choose $\sqrt{\delta}\left(1-  \upsilon_{0}\right)  <0$
sufficiently negative so that the previous inequality holds with a specified
confidence level. We hope to choose a $v_{0}$ such that $\phi^{\ast}$ will
satisfy (\ref{Bnd_2abc}) with confidence level $1-\epsilon$. This can be
achieved asymptotically by the central limit theorem as the following
result indicates.

\begin{proposition}
\label{Prop_Choice_v0}Suppose that A1) and A3) hold and let $\left\{  \phi
_{n}^{\ast}\right\}  _{n=1}^{\infty}$ be any consistent sequence of estimators
of $\phi^{\ast}$ in the sense that $\phi_{n}^{\ast}\rightarrow\phi^{\ast}$ in
probability as $n\rightarrow\infty$. Then,
\[
n^{1/2}\left[\frac{\left(  \phi_{n}^{\ast}\right)  ^{T}\left(  E_{P_{n}}\left(  R\right)
-E_{P^{\ast}}\left(  R\right)  \right)  }{\left\Vert \phi_{n}^{\ast
}\right\Vert _{p}}\right]\Rightarrow N\left(  0,\Upsilon_{\phi^{\ast}}\right),
\]
as $n\rightarrow\infty$, where
\[
\Upsilon_{\phi^{\ast}}:=\lim_{n\rightarrow\infty}Var_{P^{\ast}}\left(  n^{-1/2}\sum_{k=1}^{n}\left(
\phi^{\ast}\right)  ^{T}R_{k}/\left\Vert \phi^{\ast}\right\Vert _{{p}}\right)
.
\]

\end{proposition}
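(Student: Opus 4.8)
The plan is to reduce the statement to a central limit theorem for the fixed (oracle) portfolio $\phi^{\ast}$ via Assumption A1), and then to absorb the randomness of the estimator $\phi_{n}^{\ast}$ through a Slutsky-type argument, exploiting that the normalized direction $\phi_{n}^{\ast}/\|\phi_{n}^{\ast}\|_{p}$ is consistent while the centered empirical mean is of order $n^{-1/2}$.

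First I would set $D_{n}:=E_{P_{n}}(R)-E_{P^{\ast}}(R)$, so that the statistic of interest is $n^{1/2}(\phi_{n}^{\ast})^{T}D_{n}/\|\phi_{n}^{\ast}\|_{p}$. Applying A1) to the linear function $g(x)=(\phi^{\ast})^{T}x/\|\phi^{\ast}\|_{p}$ — which trivially satisfies the growth bound $|g(x)|\leq c(1+\|x\|_{2}^{2})$, and for which $\|\phi^{\ast}\|_{p}>0$ because the constraint $(\phi^{\ast})^{T}1=1$ forces $\phi^{\ast}\neq 0$ — yields immediately the oracle limit
\[
n^{1/2}\frac{(\phi^{\ast})^{T}D_{n}}{\|\phi^{\ast}\|_{p}}\Rightarrow N\left(0,\Upsilon_{\phi^{\ast}}\right),
\]
since the variance parameter $\Upsilon_{g}$ produced by A1) for this $g$ is exactly $\Upsilon_{\phi^{\ast}}$ as defined in the statement. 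Second, I would establish tightness of the full vector $n^{1/2}D_{n}$: applying A1) with $g(x)=a^{T}x$ for an arbitrary fixed $a\in\mathbb{R}^{d}$ gives $n^{1/2}a^{T}D_{n}\Rightarrow N(0,a^{T}\Sigma a)$ with $\Sigma:=\lim_{n}Var_{P^{\ast}}(n^{-1/2}\sum_{k=1}^{n}R_{k})$, and by the Cram\'er--Wold device this upgrades to $n^{1/2}D_{n}\Rightarrow N(0,\Sigma)$, so in particular $n^{1/2}D_{n}$ is bounded in probability ($O_{p}(1)$).

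Third, I would control the estimation error by writing the gap between the feasible and oracle statistics as
\[
n^{1/2}\left[\frac{(\phi_{n}^{\ast})^{T}D_{n}}{\|\phi_{n}^{\ast}\|_{p}}-\frac{(\phi^{\ast})^{T}D_{n}}{\|\phi^{\ast}\|_{p}}\right]=\left(n^{1/2}D_{n}\right)^{T}\left(\frac{\phi_{n}^{\ast}}{\|\phi_{n}^{\ast}\|_{p}}-\frac{\phi^{\ast}}{\|\phi^{\ast}\|_{p}}\right).
\]
Since the map $\phi\mapsto\phi/\|\phi\|_{p}$ is continuous at $\phi^{\ast}\neq 0$, the consistency $\phi_{n}^{\ast}\rightarrow\phi^{\ast}$ in probability gives $\phi_{n}^{\ast}/\|\phi_{n}^{\ast}\|_{p}-\phi^{\ast}/\|\phi^{\ast}\|_{p}=o_{p}(1)$; combined with the tightness from the second step, the right-hand side is $O_{p}(1)\cdot o_{p}(1)=o_{p}(1)$. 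Slutsky's theorem then transfers the oracle limit to the feasible statistic, which is the claim. (Assumption A3) enters only to guarantee that $\phi^{\ast}$ is well defined through uniqueness of the solution to (\ref{Op}), so that the notion of a consistent estimator of $\phi^{\ast}$ is meaningful.)

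The only nontrivial step is the third, and the main point of care is that the perturbation direction $\phi_{n}^{\ast}/\|\phi_{n}^{\ast}\|_{p}-\phi^{\ast}/\|\phi^{\ast}\|_{p}$ is random and need not align with any fixed vector $a$; this is precisely why I would insist on the tightness of the \emph{entire} vector $n^{1/2}D_{n}$ rather than convergence of a single linear combination, so that the product structure $O_{p}(1)\cdot o_{p}(1)$ applies directly. The continuity of the $\ell_{p}$-normalization at the nonzero point $\phi^{\ast}$ is what makes the perturbation vanish, and the budget constraint $(\phi^{\ast})^{T}1=1$ is exactly what rules out the degenerate case $\phi^{\ast}=0$ at which that map would fail to be continuous.
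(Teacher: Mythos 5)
Your proposal is correct and follows essentially the same route as the paper's own proof: both reduce the feasible statistic to the oracle statistic $n^{1/2}(\phi^{\ast})^{T}\bigl(E_{P_{n}}(R)-E_{P^{\ast}}(R)\bigr)/\|\phi^{\ast}\|_{p}$, show the gap is $o_{p}(1)$ by combining tightness of $n^{1/2}\bigl(E_{P_{n}}(R)-E_{P^{\ast}}(R)\bigr)$ with consistency of $\phi_{n}^{\ast}$, and conclude by the CLT from A1) together with Slutsky's theorem. Your write-up is in fact slightly more careful than the paper's (which splits the gap into a difference term plus a norm-ratio correction and leaves the tightness of the full vector and the continuity of the $\ell_{p}$-normalization implicit), but the underlying argument is the same.
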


\begin{proof}
Note that%
\begin{align*}
n^{1/2}\left[\frac{\left(  \phi_{n}^{\ast}\right)  ^{T}\left(  E_{P_{n}}\left(  R\right)
-E_{P^{\ast}}\left(  R\right)  \right)  }{\left\Vert \phi_{n}^{\ast
}\right\Vert _{p}}\right]  & =n^{1/2}\left[\frac{\left(  \phi_{n}^{\ast}-\phi^{\ast}\right)
^{T}\left(  E_{P_{n}}\left(  R\right)  -E_{P^{\ast}}\left(  R\right)  \right)
}{\left\Vert \phi_{n}^{\ast}\right\Vert _{p}}\right]\\
& +n^{1/2}\left[\frac{\left(  \phi^{\ast}\right)  ^{T}\left(  E_{P_{n}}\left(  R\right)
-E_{P^{\ast}}\left(  R\right)  \right)  }{\left\Vert \phi^{\ast}\right\Vert
_{p}}\cdot\frac{\left\Vert \phi^{\ast}\right\Vert _{p}}{\left\Vert \phi
_{n}^{\ast}\right\Vert _{p}}\right].
\end{align*}
By the central limit theorem and the fact that $\phi_{n}^{\ast}\rightarrow0$ in
probability, we conclude
\[
n^{1/2}\left[\frac{\left(  \phi_{n}^{\ast}\right)  ^{T}\left(  E_{P_{n}}\left(  R\right)
-E_{P^{\ast}}\left(  R\right)  \right)  }{\left\Vert \phi_{n}^{\ast
}\right\Vert _{p}}-\frac{\left(  \phi^{\ast}\right)  ^{T}\left(  E_{P_{n}%
}\left(  R\right)  -E_{P^{\ast}}\left(  R\right)  \right)  }{\left\Vert
\phi^{\ast}\right\Vert _{p}}\right]\Rightarrow0
\]
as $n\rightarrow\infty$. However,
\[
n^{1/2}\left[\frac{\left(  \phi^{\ast}\right)  ^{T}\left(  E_{P_{n}}\left(  R\right)
-E_{P^{\ast}}\left(  R\right)  \right)  }{\left\Vert \phi^{\ast}\right\Vert
_{p}}\right]\Rightarrow N\left(  0,\Upsilon_{\phi^{\ast}}\right) ,
\]
which implies the desired result.
\end{proof}

Using the previous result we can estimate $v_{0}$ asymptotically. Let
$\phi_{n}$ denote the optimal solution of problem $\mathcal{Q}(P_{n})$. We
know that $\phi_{n}$ converges to $\phi^{\ast}$ in probability. So, in
practice we choose a $v_{0}$ such that the following inequality will hold with
confidence level $1-\epsilon$,
\begin{equation}
\frac{1}{||\phi_{n}||_{p}}(\phi_{n})^{T}(E_{P_{n}}(R)-E_{P^{\ast}}%
(R))\geq\sqrt{\delta}(1-v_{0}).\label{eqv0}%
\end{equation}
The left-hand side of (\ref{eqv0}) is approximately normally distributed and
thus we can choose its $1-\epsilon$ quantile and then we are be able to decide
the value of $v_{0}>1$.

\section{Concluding Remarks\label{Sec_Conc}}

We have provided a data-driven DRO theory for Markowitz's mean--variance
portfolio selection. The robust model can be solved via a non-robust one based
on the empirical probability measure with an additional regularization term.
The size of the distributional uncertainty region is not exogenously given;
rather it is informed by the return data in a scheme which we have developed
in this paper.

Our results may be generalized in different directions. We have chosen the
$l_{q}$\ norm in defining our Wasserstein distance due to its popularity in
regularization, but other transportation costs can be used. For example, one
may consider the type of transportation cost related to adaptive
regularization that has been studied by Blanchet, Kang, Zhang and Murthy
(2017), or the one related to industry cluster as in Blanchet and Kang (2017).
Another significant direction is a dynamic (discrete-time or continuous-time)
version of the DRO Markowitz model.
%\newpage

\bigskip

\bigskip

\noindent\textbf{{\Large Appendices}}

\appendix

\section{Proof of Proposition 1\label{prop1proof}}

We consider the following problem
\begin{equation}
\min\limits_{P\in D_{c}(P,P_{n})\leq\delta}\phi^{T}E_{P}[R]
\end{equation}
or, equivalently,
\begin{equation}
-\max\limits_{P\in D_{c}(P,P_{n})\leq\delta}E_{P}[(-\phi)^{T}R].
\end{equation}
By checking Slater's condition and using Proposition 4 of Blanchet, Kang and
Murthy (2016) we obtain the dual problem:
\begin{equation}
\max\limits_{P\in D_{c}(P,P_{n})\leq\delta}E_{P}[(-\phi)^{T}R]=\inf
\limits_{\lambda\geq0}\left[ \lambda\delta+\frac{1}{n}\Phi_{\lambda}%
(R_{i})\right]  \label{dual1}%
\end{equation}
where
\begin{align*}
\Phi_{\lambda}(R_{i})  &  =\sup\limits_{u}\{h(u)-\lambda c(u,R_{i})\}\\
&  =\sup\limits_{u}\{(-\phi^{T})u-\lambda||u-R_{i}||_{q}^{2}\}\\
&  =\sup\limits_{\Delta}\{(-\phi^{T})(\Delta+R_{i})-\lambda||\Delta||_{q}%
^{2}\}\\
&  =\sup\limits_{\Delta}\{(-\phi^{T})\Delta-\lambda||\Delta||_{q}^{2}%
\}-\phi^{T}R_{i}\\
&  =\sup\limits_{\Delta}\{||\phi||_{p}||\Delta||_{q}-\lambda||\Delta||_{q}%
^{2}\}-\phi^{T}R_{i}\\
&  =\frac{||\phi||_{p}^{2}}{4\lambda}-\phi^{T}R_{i}.
\end{align*}
Thus, (\ref{dual1}) becomes
\begin{align*}
\max\limits_{P\in D_{c}(P,P_{n})\leq\delta}E_{P}[(-\phi)^{T}R]  &
=\inf\limits_{\lambda\geq0}\{\lambda\delta+\frac{1}{n}[\frac{||\phi||_{p}^{2}%
}{4\lambda}-\phi^{T}R_{i}]\}\\
&  =\inf\limits_{\lambda\geq0}\{\lambda\delta+\frac{||\phi||_{p}^{2}}%
{4\lambda}-\phi^{T}E_{P_{n}}[R]\}\\
&  =\sqrt{\delta}||\phi||_{p}-\phi^{T}E_{P_{n}}[R]
\end{align*}
or
\begin{equation}
\min\limits_{P\in D_{c}(P,P_{n})\leq\delta}\phi^{T}E_{P}[R]=\phi^{T}E_{P_{n}%
}[R]-\sqrt{\delta}||\phi||_{p}.
\end{equation}

\section{Proof of Proposition 2\label{prop2}}

Introducing a slack random variable $S\equiv v$, where $v$ is a deterministic
number. Then we can recast problem (\ref{maxpp}) as
\begin{align}
\label{maxp}\max\{E_{P}[(U^{T}\phi)^{2}]  &  :E_{\pi}[c(U,R)+S]=\delta,\pi
_{R}=P_{n},\pi(S=v)=1,\\
E_{\pi}[U^{T}\phi]  &  =\alpha,\pi\in\mathcal{P}(\mathcal{R}^{m}%
\times\mathcal{R}^{m}\times\mathcal{R}_{+})\}.
\end{align}
%In turn, (\ref{maxp})\ can be seen to be a moment problem. In order to see this,
Define
\[
\Omega:=\{(u,r,s):c(u,r)<\infty,s\geq0\},
\]
and let

\begin{center}%
\begin{equation}
f\left(  u,r,s\right)  =\left[
\begin{matrix}
1_{r=R_{1}}(u,r,s)\\
...\\
1_{r=R_{n}}(u,r,s)\\
\phi^{T}u\\
1_{s=v}(u,r,s)\\
c(u,r)+s
\end{matrix}
\right]  \ \ \mathrm{and}\ \ q=\left[
\begin{matrix}
\frac{1}{n}\\
...\\
\frac{1}{n}\\
\alpha\\
1\\
\delta
\end{matrix}
\right] .
\end{equation}

\end{center}

Thus (\ref{maxp}) can be written as,
\begin{equation}
\max\{E_{\pi}[(U^{T}\phi)^{2}]:E_{\pi}[f(U,R,S)]=q,\pi\in\mathcal{P}_{\Omega
}\}. \label{maxp2}%
\end{equation}
Let $f_{0}=\mathbf{1}_{\Omega}$, $\tilde{f}=(f_{0},f)$, $\tilde{q}=(1,q)$,
$\mathcal{Q}_{\tilde{f}}:=\{\int\tilde{f}(x)d\mu(x):\mu\in\mathcal{M}_{\Omega
}^{+}\}$ where $\mathcal{M}_{\Omega}^{+}$ denote the set of non-negative
measures on $\Omega$. If $\phi\neq0$, then it is easy to see that $\tilde{q}$
lies in the interior of $\mathcal{Q}_{\tilde{f}}$. By Proposition 6 in
Blanchet, Kang and Murthy (2016), the optimal value of problem (\ref{maxp2})
equals to that of its dual problem, i.e.,
\begin{align}
\max\{E_{\pi}[(U^{T}\phi)^{2}]  &  :E_{\pi}[f(U,R,S)]=q,\pi\in\mathcal{P}%
_{\Omega}\}\label{maxp3}\\
&  =\inf\limits_{a=\left(  a_{0},...,a_{n}\right)  \in A}\{a_{0}+\frac{1}%
{n}\sum\limits_{i=1}^{n}a_{i}+\alpha a_{n+1}+a_{n+2}+\delta a_{n+3}%
\},\nonumber
\end{align}
where
\begin{align*}
A:=  &  \{a=\left(  a_{0},...,a_{n}\right)  :a_{0}+\frac{1}{n}\sum
\limits_{i=1}^{n}a_{i}1_{r=R_{i}}(u,r,s)+a_{n+1}\phi^{T}u\\
&  \left.  +a_{n+2}1_{s=v}(u,r,s)+a_{n+3}[c(u,r)+s]\geq(\phi^{T}u)^{2}%
,\forall(u,r,s)\in\Omega\}.\right.
\end{align*}
From the definition of $A$, replacing $r=R_{i}$, we obtain that the inequality%
\begin{equation}
a_{0}+a_{i}+a_{n+2}\geq\sup\limits_{(u,s)\in\Omega}\{(\phi^{T}u)^{2}%
-a_{n+3}[c(u,R_{i})+s]-a_{n+1}\phi^{T}u\}
\end{equation}
holds for each $i\in\{1,...,n\}$. It follows directly that
\begin{align}
&  \sup\limits_{(u,s)\in\Omega}\{(\phi^{T}u)^{2}-a_{n+3}[c(u,R_{i}%
)+s]-a_{n+1}\phi^{T}u\}\\
&  =%
\begin{cases}
+\infty, & \mbox{ if $a_{n+3}<0$}\\
\sup\limits_{u}\{(\phi^{T}u)^{2}-a_{n+3}c(u,R_{i})-a_{n+1}\phi^{T}u\}, &
\mbox{ if $a_{n+3}\geq0$}.
\end{cases}
\end{align}
Thus, the dual problem can be expressed as
\begin{equation}
\inf\{a_{0}+\frac{1}{n}\sum\limits_{i=1}^{n}a_{i}+\alpha a_{n+1}%
+a_{n+2}+\delta a_{n+3}:\\
a_{n+3}\geq0,a_{0}+a_{i}+a_{n+2}\geq\sup\limits_{u}\{(\phi^{T}u)^{2}%
-a_{n+3}c(u,R_{i})-a_{n+1}\phi^{T}u\}\}
\end{equation}
which can be transformed into
\begin{equation}
\inf\limits_{a_{n+3}\geq0}\{\frac{1}{n}\sum\limits_{i=1}^{n}\Phi(R_{i})+\alpha
a_{n+1}+\delta a_{n+3}\},
\end{equation}
with
\[
\Phi(R_{i}):=\sup\limits_{u}\{(\phi^{T}u)^{2}-a_{n+3}c(u,R_{i})-a_{n+1}%
\phi^{T}u\}.
\]
Using $\lambda_{1}$ to replace $a_{n+3}$ and $\lambda_{2}$ to replace
$a_{n+1}$, the dual problem becomes
\begin{equation}
\inf\limits_{\lambda_{1}\geq0}\{\frac{1}{n}\sum\limits_{i=1}^{n}\Phi
(R_{i})+\lambda_{2}\alpha+\lambda_{1}\delta\}
\end{equation}
where
\[
\Phi(R_{i}):=\sup\limits_{u}\{(\phi^{T}u)^{2}-\lambda_{1}c(u,R_{i}%
)-\lambda_{2}\phi^{T}u\}.
\]

\section{Proof of Proposition 3\label{prop3}}

Writing $\Delta:=u-R_{i}$, we have
\begin{align*}
\Phi(R_{i})  &  =\sup\limits_{u}\{(\phi^{T}u)^{2}-\lambda_{1}c(u,R_{i}%
)-\lambda_{2}\phi^{T}u\}\\
&  =\sup\limits_{u}\{(\phi^{T}u)^{2}-\lambda_{1}||u-R_{i}||_{q}^{2}%
-\lambda_{2}\phi^{T}u\}\\
&  =\sup\limits_{\Delta}\{(\phi^{T}(\Delta+R_{i}))^{2}-\lambda_{1}%
||\Delta||_{q}^{2}-\lambda_{2}\phi^{T}(R_{i}+\Delta)\}\\
&  =\sup\limits_{\Delta}\{(\phi^{T}R_{i})^{2}+(\phi^{T}\Delta)^{2}+2(\phi
^{T}R_{i})(\phi^{T}\Delta)-\lambda_{1}||\Delta||_{q}^{2}-\lambda_{2}\phi
^{T}(R_{i}+\Delta)\}\\
&  =(\phi^{T}R_{i})^{2}-\lambda_{2}\phi^{T}R_{i}+\sup\limits_{\Delta}%
\{(\phi^{T}\Delta)^{2}+2(\phi^{T}R_{i})(\phi^{T}\Delta)-\lambda_{1}%
||\Delta||_{q}^{2}-\lambda_{2}\phi^{T}\Delta\}\\
&  =(\phi^{T}R_{i})^{2}-\lambda_{2}\phi^{T}R_{i}+\sup\limits_{\Delta}%
\{(||\phi||_{p}^{2}-\lambda_{1})||\Delta||_{q}^{2}+|2(R_{i}^{T}\phi
)-\lambda_{2}|(||\phi||_{p}||\Delta||_{q})\}.
\end{align*}
We can consider four cases: 1) $||\phi||_{p}^{2}>\lambda_{1}$, $\Phi
(R_{i})=+\infty$; 2) $||\phi||_{p}^{2}=\lambda_{1}$, $2R_{i}^{T}\phi
\neq\lambda_{2}$, $\Phi(R_{i})=+\infty$; 3) $||\phi||_{p}^{2}=\lambda_{1}$,
$2R_{i}^{T}\phi=\lambda_{2}$, $\Phi(R_{i})=0$; 4) $||\phi||_{p}^{2}%
<\lambda_{1}$, $\Phi(R_{i})=(\phi^{T}R_{i})^{2}-\lambda_{2}\phi^{T}R_{i}%
+\frac{(2R_{i}^{T}\phi-\lambda_{2})^{2}||\phi||_{p}^{2}}{4(\lambda_{1}%
-||\phi||_{p}^{2})}$.

For any of the first three cases, the value of $\frac{1}{n}\sum\limits_{i=1}%
^{n}\Phi(R_{i})$ is $+\infty$.
%In the end we want to minimize (\ref{p1}).
Hence only the fourth case is non-trivial. In this case, problem
(\ref{dualpp}) is transformed into
\begin{equation}%
\begin{split}
&  \inf\limits_{\lambda_{1}\geq0,\lambda_{2}}[\frac{1}{n}\sum\limits_{i=1}%
^{n}\Phi(R_{i})+\lambda_{2}\alpha+\lambda_{1}\delta]\\
&  =\inf\limits_{\lambda_{1}\geq||\phi||_{p}^{2},\lambda_{2}}\left\{ \frac
{1}{n}\sum\limits_{i=1}^{n}\left[ (\phi^{T}R_{i})^{2}-\lambda_{2}\phi^{T}%
R_{i}+\frac{(2R_{i}^{T}\phi-\lambda_{2})^{2}||\phi||_{p}^{2}}{4(\lambda
_{1}-||\phi||_{p}^{2})}\right] +\lambda_{2}\alpha+\lambda_{1}\delta\right\}
\end{split}
\label{p2}%
\end{equation}
%We first optimize above problem over $\lambda_{2}$ and then optimize over $\lambda_{1}$.
Define%
\[
H=\frac{1}{n}\sum_{i=1}^{n}\left[ \left(  \phi^{T}R_{i}\right)  ^{2}%
-\lambda_{2}\phi^{T}R_{i}+\frac{\left(  2R_{i}^{T}\phi-\lambda_{2}\right)
^{2}\left\Vert \phi\right\Vert _{p}^{2}}{4\left(  \lambda_{1}-\left\Vert
\phi\right\Vert _{p}^{2}\right)  }\right] +\lambda_{2}\alpha+\lambda_{1}%
\delta.
\]
Taking partial derivative with respect to $\lambda_{2}$ and setting it to be
$0$, we get%

\[
\frac{\partial H}{\partial\lambda_{2}}=\alpha-\frac{1}{n}\sum\limits_{i=1}%
^{n}\left[ \phi^{T}R_{i}+\frac{(2\phi^{T}R_{i}-\lambda_{2})||\phi||_{p}^{2}%
}{2(\lambda_{1}-||\phi||_{p}^{2})}\right] =0
\]
which implies (note that $\phi^{T}1=1$ guarantees that $||\phi||_{p}^{2}>0$)%
\begin{equation}
\lambda_{2}=2\alpha-2C\frac{\lambda_{1}}{||\phi||_{p}^{2}} \label{lamb2}%
\end{equation}
where $C:=\alpha-\phi^{T}E_{P_{n}}[R]$.
%\begin{equation}
%\lambda_{2}=2\alpha-2(\alpha-\phi^{T}E_{P_{n}}[R])\frac{\lambda_{1}}.%
%{||\phi||_{p}^{2}}.
%\end{equation}
%Let $C:=\alpha-\phi^{T}E_{P_{n}}[R]$, then we can write
Moreover, $\lambda_{2}$ is optimal because%
\begin{equation}
\frac{\partial^{2}H}{\partial\lambda_{2}^{2}}=\frac{||\phi||_{p}^{2}%
}{2(\lambda_{1}-||\phi||_{p}^{2})}>0.
\end{equation}
We plug (\ref{lamb2}) into (\ref{p2}) and obtain
\begin{align*}
&  \inf\limits_{\lambda_{1}\geq0,\lambda_{2}}\left[ \frac{1}{n}\sum
\limits_{i=1}^{n}\Phi(R_{i})+\lambda_{2}\alpha+\lambda_{1}\delta\right] \\
&  =\frac{1}{n}\sum\limits_{i=1}^{n}(\phi^{T}R_{i})^{2}+\inf\limits_{\lambda
_{1}\geq||\phi||_{p}^{2},\lambda_{2}}\left\{ \frac{1}{n}\sum\limits_{i=1}%
^{n}\left[ -\lambda_{2}\phi^{T}R_{i}+\frac{(2R_{i}^{T}\phi-\lambda_{2}%
)^{2}||\phi||_{p}^{2}}{4(\lambda_{1}-||\phi||_{p}^{2})}\right] +\lambda
_{2}\alpha+\lambda_{1}\delta\right\} \\
&  =\frac{1}{n}\sum\limits_{i=1}^{n}(\phi^{T}R_{i})^{2}+\inf\limits_{\lambda
_{1}\geq||\phi||_{p}^{2},\lambda_{2}}\left\{ \frac{1}{n}\sum\limits_{i=1}%
^{n}\left[ \frac{(2R_{i}^{T}\phi-\lambda_{2})^{2}||\phi||_{p}^{2}}%
{4(\lambda_{1}-||\phi||_{p}^{2})}\right] +\lambda_{2}C+\lambda_{1}%
\delta\right\} \\
&  =\frac{1}{n}\sum\limits_{i=1}^{n}(\phi^{T}R_{i})^{2}+\inf\limits_{\lambda
_{1}\geq||\phi||_{p}^{2}}\left\{ \frac{1}{n}\sum\limits_{i=1}^{n}\left[
\frac{(2R_{i}^{T}\phi-2\alpha+2C\frac{\lambda_{1}}{||\phi||_{p}^{2}}%
)^{2}||\phi||_{p}^{2}}{4(\lambda_{1}-||\phi||_{p}^{2})}\right] +(2\alpha
-2C\frac{\lambda_{1}}{||\phi||_{p}^{2}})C+\lambda_{1}\delta\right\} .
\end{align*}
Writing $\lambda_{1}=\kappa+||\phi||_{p}^{2}$, we have
\begin{align*}
&  \inf\limits_{\lambda_{1}\geq0,\lambda_{2}}\left[ \frac{1}{n}\sum
\limits_{i=1}^{n}\Phi(R_{i})+\lambda_{2}\alpha+\lambda_{1}\delta\right] \\
&  =\frac{1}{n}\sum\limits_{i=1}^{n}(\phi^{T}R_{i})^{2}+\inf\limits_{\kappa
\geq0}\left\{ \frac{1}{n}\sum\limits_{i=1}^{n}\left[ \frac{(R_{i}^{T}%
\phi-\alpha+C\frac{k+|\phi||_{p}^{2}}{||\phi||_{p}^{2}})^{2}N}{\kappa}\right]
+(2\alpha-2C\frac{\kappa+||\phi||_{p}^{2}}{||\phi||_{p}^{2}})C+(\kappa
+||\phi||_{p}^{2})\delta\right\} \\
&  =\frac{1}{n}\sum\limits_{i=1}^{n}(\phi^{T}R_{i})^{2}+\inf\limits_{\kappa
\geq0}\{\frac{C_{1}^{2}}{||\phi||_{p}^{2}}k+2||\phi||_{p}^{2}(\phi^{T}\bar
{W}-\alpha+C)+\frac{1}{n}\sum\limits_{i=1}^{n}\frac{(R_{i}^{T}\phi
-\alpha+C)^{2}||\phi||_{p}^{2}}{\kappa}\\
&  +2\alpha C-2C^{2}+\kappa(\delta-\frac{2C^{2}}{||\phi||_{p}^{2}}%
)+||\phi||_{p}^{2}\delta\}\\
&  =\frac{1}{n}\sum\limits_{i=1}^{n}(\phi^{T}R_{i})^{2}+2\alpha C-2C^{2}%
+|\phi||_{p}^{2}\delta+\inf\limits_{\kappa\geq0}\left\{ \frac{1}{n}%
\sum\limits_{i=1}^{n}\frac{(R_{i}^{T}\phi-E_{P_{n}}[R]\phi)^{2}||\phi
||_{p}^{2}}{\kappa}+\kappa(\delta-\frac{C^{2}}{||\phi||_{p}^{2}})\right\} .
\end{align*}
If $\delta-C^{2}/||\phi||_{p}^{2}<0$, then the optimal value of the above
problem is $-\infty$, which means that the primal problem (\ref{maxpp}) is not
feasible. If $\delta-C^{2}/||\phi||_{p}^{2}\geq0$, then
\begin{align*}
&  \frac{1}{n}\sum\limits_{i=1}^{n}(\phi^{T}R_{i})^{2}+2\alpha C-2C^{2}%
+|\phi||_{p}^{2}\delta+\inf\limits_{\kappa\geq0}\left\{ \frac{1}{n}%
\sum\limits_{i=1}^{n}\frac{(R_{i}^{T}\phi-E_{P_{n}}[R]\phi)^{2}||\phi
||_{p}^{2}}{\kappa}+\kappa(\delta-\frac{C^{2}}{||\phi||_{p}^{2}})\right\} \\
&  =\frac{1}{n}\sum\limits_{i=1}^{n}(\phi^{T}R_{i})^{2}+2(\alpha-\phi
^{T}E_{P_{n}}[R])\phi^{T}E_{P_{n}}[R]+\delta||\phi||_{p}^{2}\\
&  +2\sqrt{\delta||\phi||_{p}^{2}-(\alpha-\phi^{T}E_{P_{n}}[R])^{2}}%
\sqrt{\frac{1}{n}\phi^{T}\sum\limits_{i=1}^{n}(R_{i}-E_{P_{n}}[R])(R_{i}%
-E_{P_{n}}[R])^{T}\phi}\\
&  =\frac{1}{n}\sum\limits_{i=1}^{n}(\phi^{T}R_{i})^{2}+2(\alpha-\phi
^{T}E_{P_{n}}[R])\phi^{T}E_{P_{n}}[R]+\delta||\phi||_{p}^{2}\\
&  +2\sqrt{\delta||\phi||_{p}^{2}-(\alpha-\phi^{T}E_{P_{n}}[R])^{2}}\sqrt
{\phi^{T}Var_{P_{n}}[R]\phi}.
\end{align*}
Thus, problem (\ref{maxpp}) can be written as
\begin{align*}
&  \min\limits_{\phi}\ \ \frac{1}{n}\sum\limits_{i=1}^{n}(\phi^{T}R_{i}%
)^{2}+2(\alpha-\phi^{T}E_{P_{n}}[R])\phi^{T}E_{P_{n}}[R]+\delta||\phi
||_{p}^{2}\\
&  +2\sqrt{\delta||\phi||_{p}^{2}-(\alpha-\phi^{T}E_{P_{n}}[R])^{2}}\sqrt
{\phi^{T}Var_{P_{n}}[R]\phi},
\end{align*}
subject to $1^{T}\phi=1$ and $(\alpha-\phi^{T}E_{P_{n}}[R])^{2}-\delta
||\phi||_{p}^{2}\leq0$.

\section{Proof of Theorem 2\label{theorem2}}

Define
\[
h_{0}\left(  R,\Sigma\right)  =RR^{T}-\Sigma\text{ \ and \ }h_{1}\left(
R,\mu\right)  =R-\mu.
\]
Then, by Proposition 1 of Blanchet, Kang and Murthy (2016) we have that for
any given $\mu$ and $\Sigma$,%
\[
\mathcal{R}_{n}(\Sigma,\mu)=\sup\limits_{\Lambda\in\mathbb{R}^{d\times
d},\lambda\in\mathbb{R}^{d}}\left\{ -E_{P_{n}}[\sup\limits_{u\in\mathbb{R}%
^{d}}\{Tr\left(  \Lambda h_{0}(u,\Sigma)\right)  +\lambda^{T}h_{1}\left(
u,\mu\right)  -||u-R||_{q}^{2}\}]\right\} .
\]
Observe that
\begin{align*}
&  \sup\limits_{u\in\mathbb{R}^{d}}\{Tr\left(  \Lambda h_{0}(u,\Sigma)\right)
+\lambda^{T}h_{1}\left(  u,\mu\right)  -||u-R||_{q}^{2}\}\\
&  =\sup\limits_{\Delta\in\mathbb{R}^{d}}\{Tr\left(  \Lambda h_{0}%
(\Delta+R,\Sigma)\right)  +\lambda^{T}h_{1}\left(  \Delta+R,\mu\right)
-||\Delta||_{q}^{2}\}\\
&  =\sup\limits_{\Delta\in\mathbb{R}^{d}}\{Tr\left(  \Lambda\left[
h_{0}(\Delta+R,\Sigma)-h_{0}(R,\Sigma)\right]  \right)  +\lambda^{T}%
\Delta-||\Delta||_{q}^{2}\}\\
&  +Tr(\Lambda h_{0}(R,\Sigma))+\lambda^{T}h_{1}\left(  R,\mu\right) .
\end{align*}

Moreover, let us write
\[
Tr\left(  \Lambda\left[  h_{0}(\Delta+R,\Sigma)-h_{0}(R,\Sigma)\right]
\right)  =\int_{0}^{1}\frac{d}{dt}Tr\left(  \Lambda h_{0}\left(
R+t\Delta\right)  \right)  dt.
\]
However,
\begin{align*}
\frac{d}{dt}Tr\left(  \Lambda h_{0}\left(  R+t\Delta\right)  \right)   &
=2Tr\left(  \Lambda\left(  R+t\Delta\right)  \Delta^{T}\right) \\
&  =2Tr\left(  \Lambda R\Delta^{T}\right)  +2t\Delta^{T}\Lambda\Delta.
\end{align*}
Furthermore,%
\begin{equation}
\left.  E_{P_{n}}\left[  Tr(\Lambda h_{0}(R,\Sigma))\right]  \right\vert
_{\Sigma=\Sigma_{n}}=0. \label{1st_Term}%
\end{equation}
So, we deduce
\begin{align*}
&  \mathcal{R}_{n}(\Sigma_{n},\mu)\\
&  =\sup_{\lambda\in\mathbb{R}^{d}}\{-E_{P_{n}}[\lambda^{T}\left(
R-\mu\right)  ]+\\
&  \sup\limits_{\Lambda\in\mathbb{R}^{d\times d}}(-E_{P_{n}}[\sup_{\Delta
}\{2Tr\left(  \Lambda R\Delta^{T}\right)  +\Delta^{T}\Lambda\Delta+\lambda
^{T}\Delta-||\Delta||_{q}^{2}\}])\}.
\end{align*}
Introduce the scaling $\Delta=\bar{\Delta}/n^{1/2}$ and $\bar{\lambda}=\lambda
n^{1/2}$ and $\bar{\Lambda}=\Lambda n^{1/2}$. Then we obtain
\begin{align*}
&  n\mathcal{R}_{n}(\Sigma_{n},\mu_{n})\\
&  =\sup_{\bar{\lambda}\in\mathbb{R}^{d}}\{-n^{-1/2}\sum_{i=1}^{n}\bar
{\lambda}^{T}\left(  R_{i}-\mu_{n}\right)  +\\
&  \sup\limits_{\bar{\Lambda}\in\mathbb{R}^{d\times d}}(-E_{P_{n}}[\sup
_{\bar{\Delta}}\{2Tr\left(  \bar{\Lambda}R\bar{\Delta}^{T}\right)
+\bar{\Delta}^{T}\bar{\Lambda}\bar{\Delta}/n^{1/2}+\bar{\lambda}^{T}%
\bar{\Delta}-||\bar{\Delta}||_{q}^{2}\}])\}.
\end{align*}
In the proof of Proposition 3 in Blanchet, Kang and Murthy (2016), under
Assumption A2), a technique is introduced to show that $\bar{\Delta}$ and
$\bar{\lambda}$ can be restricted to compact sets with high probability and
therefore the term $\bar{\Delta}^{T}\bar{\Lambda}\bar{\Delta}/n^{1/2}$ is
asymptotically negligible. On the other hand,
\begin{align*}
&  \sup_{\Delta}\{2Tr\left(  \bar{\Delta}^{T}\bar{\Lambda}R\right)
+\bar{\Delta}^{T}\bar{\lambda}-||\bar{\Delta}||_{q}^{2}\}\\
&  =\sup_{\Delta}\{2\left\Vert \bar{\Lambda}R+\bar{\lambda}\right\Vert
_{p}\left\Vert \bar{\Delta}\right\Vert _{q}-||\bar{\Delta}||_{q}%
^{2}\}=\left\Vert \bar{\Lambda}R+\bar{\lambda}\right\Vert _{p}^{2}.
\end{align*}
Therefore, if
\[
n^{-1/2}\sum_{i=1}^{n}\left(  R_{i}-\mu_{n}\right)  \Rightarrow-Z
\]
for some $Z$ (to be characterized momentarily), then we conclude that%
\[
\mathcal{R}_{n}(\Sigma_{n},\mu_{n})\Rightarrow L_{0}=\sup_{\bar{\lambda}%
\in\mathbb{R}^{d}}\{\bar{\lambda}^{T}Z-\inf_{\bar{\Lambda}\in\mathbb{R}%
^{d\times d}}E_{P^{\ast}}[\left\Vert \bar{\Lambda}R+\bar{\lambda}\right\Vert
_{p}^{2}]\}.
\]
If $p=2$ then we have
\[
E_{P^{\ast}}[\left\Vert \bar{\Lambda}R+\bar{\lambda}\right\Vert _{2}^{2}%
]=\sum_{i}E_{P^{\ast}}\left(  \bar{\Lambda}_{i\cdot}\cdot R+\bar{\lambda}%
_{i}\right)  ^{2}.
\]
So, taking derivative with respect to the $i$-th row, $\bar{\Lambda}_{i\cdot}%
$, of the matrix $\bar{\Lambda}$, $\bar{\Lambda}_{i\cdot}$, we obtain
\begin{equation}
\nabla_{\bar{\Lambda}_{i\cdot}}E_{P^{\ast}}[\left\Vert \bar{\Lambda}%
R+\bar{\lambda}\right\Vert _{2}^{2}]=2E_{P^{\ast}}\left(  \left(  R^{T}%
\bar{\Lambda}_{i\cdot}+\bar{\lambda}_{i}\right)  R\right)  =2E_{P^{\ast}%
}\left(  R^{T}\bar{\Lambda}_{i\cdot}R\right)  +2\bar{\lambda}_{i}E_{P^{\ast}%
}\left(  R\right)  =0. \label{ee}%
\end{equation}
Writing
\[
\mu_{\ast}=E_{P^{\ast}}\left(  R\right)  \text{ and }\Sigma_{\ast}=E_{P^{\ast
}}\left(  RR^{T}\right) ,
\]
and then multiplying (\ref{ee}) by $\bar{\Lambda}_{i\cdot}^{T}$ we obtain
\[
\bar{\Lambda}_{i\cdot}^{T}\Sigma_{\ast}\bar{\Lambda}_{i\cdot}=-\bar{\lambda
}_{i}\bar{\Lambda}_{i\cdot}^{T}\mu_{\ast}.
\]
To solve this equation take%
\[
\bar{\Lambda}_{i\cdot}=a_{i}\mu_{\ast},
\]
leading to
\[
a_{i}\mu_{\ast}^{T}\Sigma_{\ast}\mu_{\ast}=-\bar{\lambda}_{i}\left\Vert
\mu_{\ast}\right\Vert _{2}^{2},
\]
or
\[
a_{i}=-\bar{\lambda}_{i}\left\Vert \mu_{\ast}\right\Vert _{2}^{2}/\mu_{\ast
}^{T}\Sigma_{\ast}\mu_{\ast}.
\]
Therefore,%
\begin{align*}
E_{P^{\ast}}\left(  \bar{\Lambda}_{i\cdot}R+\bar{\lambda}_{i}\right)  ^{2}  &
=\bar{\Lambda}_{i\cdot}^{T}\Sigma_{\ast}\bar{\Lambda}_{i\cdot}+2\bar{\lambda
}_{i}\bar{\Lambda}_{i\cdot}^{T}\mu_{\ast}+\bar{\lambda}_{i}^{2}\\
&  =\bar{\lambda}_{i}^{2}+\bar{\lambda}_{i}\bar{\Lambda}_{i\cdot}^{T}\mu
_{\ast}=\bar{\lambda}_{i}^{2}\left(  1-\left\Vert \mu_{\ast}\right\Vert
_{2}^{4}/\mu_{\ast}^{T}\Sigma_{\ast}\mu_{\ast}\right)  .
\end{align*}
Observe that $\left\Vert \mu_{\ast}\right\Vert _{2}^{4}/\mu_{\ast}^{T}%
\Sigma_{\ast}\mu_{\ast}<1$ if and only if
\[
Tr\left(  E_{P^{\ast}}\left(  RR^{T}\right)  E_{P^{\ast}}\left(  R\right)
E_{P^{\ast}}\left(  R^{T}\right)  \right)  >Tr\left(  E_{P^{\ast}}\left(
R\right)  E_{P^{\ast}}\left(  R^{T}\right)  E_{P^{\ast}}\left(  R\right)
E_{P^{\ast}}\left(  R^{T}\right)  \right)  ,
\]
which in turn holds if and only if%
\begin{align*}
&  Tr\left(  E_{P^{\ast}}\left(  R^{T}\right)  \left[  E_{P^{\ast}}\left(
RR^{T}\right)  -E_{P^{\ast}}\left(  R\right)  E_{P^{\ast}}\left(
R^{T}\right)  \right]  E_{P^{\ast}}\left(  R\right)  \right) \\
&  =Var_{P^{\ast}}\left(  E_{P^{\ast}}\left(  R^{T}\right)  R\right)  >0.
\end{align*}
It follows from A3) that $Var_{P^{\ast}}\left(  E_{P^{\ast}}\left(
R^{T}\right)  R\right)  >0$. Hence,%
\begin{align*}
L_{0}  &  =\sup_{\bar{\lambda}\in\mathbb{R}^{d}}\{\bar{\lambda}^{T}%
Z-\inf_{\bar{\Lambda}\in\mathbb{R}^{d\times d}}E_{P^{\ast}}[\left\Vert
\bar{\Lambda}R+\bar{\lambda}\right\Vert _{2}^{2}]\}\\
&  =\sup_{\bar{\lambda}\in\mathbb{R}^{d}}\{\bar{\lambda}^{T}Z-\left\Vert
\bar{\lambda}\right\Vert _{2}^{2}\left(  1-\left\Vert \mu_{\ast}\right\Vert
_{2}^{4}/\mu_{\ast}^{T}\Sigma_{\ast}\mu_{\ast}\right)  \}\\
&  =\frac{\left\Vert Z\right\Vert _{2}^{2}}{4\left(  1-\left\Vert \mu_{\ast
}\right\Vert _{2}^{4}/\mu_{\ast}^{T}\Sigma_{\ast}\mu_{\ast}\right)  }.
\end{align*}
It remains to identify $Z$. Observe that
\begin{align*}
\mu_{n}  &  =\rho1+2\left(  \Sigma_{n}\phi^{\ast}-\phi^{\ast T}\Sigma_{n}%
\phi^{\ast}1\right)  /\lambda_{1}^{\ast}\\
&  =\rho1+2\left(  \Sigma_{\ast}\phi^{\ast}-\phi^{\ast T}\Sigma_{\ast}%
\phi^{\ast}1\right)  /\lambda_{1}^{\ast}\\
&  +2\left(  H_{n}\phi^{\ast}-\phi^{\ast T}H_{n}\phi^{\ast}1\right)
/\lambda_{1}^{\ast}\\
&  =\mu_{\ast}+2\left(  H_{n}\phi^{\ast}-\phi^{\ast T}H_{n}\phi^{\ast
}1\right)  /\lambda_{1}^{\ast},
\end{align*}
where $H_{n}:=\Sigma_{n}-\Sigma_{\ast}$. By A1) we have
\begin{align*}
n^{-1/2}\sum_{i=1}^{n}\left(  R_{i}-\mu_{\ast}\right)   &  \Rightarrow
Z_{0}\sim N(0,\Upsilon_{g_{1}}),\\
n^{1/2}H_{n}  &  \Rightarrow Y\sim N(0,\Upsilon_{g_{2}}).
\end{align*}
Thus,
\begin{align*}
&  n^{-1/2}\sum_{i=1}^{n}\bar{\lambda}^{T}\left(  R_{i}-\mu_{\ast}\right)
+2n^{1/2}\bar{\lambda}^{T}\left(  H_{n}\phi^{\ast}-\phi^{\ast T}H_{n}%
\phi^{\ast}1\right)  /\lambda_{1}^{\ast}\\
&  \Rightarrow\bar{\lambda}^{T}Z=\bar{\lambda}^{T}\left(  Z_{0}+Z_{1}\right) ,
\end{align*}
where
\[
Z_{1}:=2\left(  Y\phi^{\ast}-\phi^{\ast T}Y\phi^{\ast}1\right)  /\lambda
_{1}^{\ast}.
\]


\begin{thebibliography}{99}                                                                                               %

\bibitem {BK17}Blanchet, J., and Kang, Y. (2017) \emph{Distributionally Robust
Groupwise Regularization Estimator}, arXiv:https://arxiv.org/abs/1705.04241


\bibitem {MA:pfeouapwf}Blanchet, J., Kang, Y., and Murthy, K. (2016)
\emph{Robust Wasserstein Profile Inference and Applications to Machine
Learning}, arXiv:https://arxiv.org/abs/1610.05627


\bibitem {BKZM17}Blanchet, J., Kang, Y., Zhang, F., and Murthy, K. (2017)
\emph{Data-driven Optimal Transport Cost Selection for Distributionally Robust
Optimization}, arXiv:https://arxiv.org/abs/1705.07152

\bibitem {Costa.{2001}}Costa, O. and Paiva, A. (2002) \emph{Robust portfolio
selection using linear-matrix inequalities}, Journal of Economic Dynamics and
Control, 26 , pp. 889-909.
%\bibitem {MA:pfeouapwf1}Vaart, A. and Wellner, J. \newblock Weak
%Convergence and Empirical Process.
%\newblock {\em Management Science}, 46(11):1497--1512, Nov. 2000.
\bibitem {DelageYe.{2010}}Delage, E. and Ye, Y. (2010) \emph{Data-Driven
Distributionally Robust Optimization Under Moment Uncertainty with Application
to Data-Driven Problems}, Operations Research, 58 , pp. 595-612.

%\bibitem {Erdogan, Iyengar. {2006}}Erdogan, E. and Iyengar, G. (2006)
%\emph{Ambiguous chance constrained problems and robust optimization},
%Mathematical Programming, 107, pp. 37-61.

\bibitem {EsfahaniKuhn.{2017}}Esfahani, P. and Kuhn, D. (2017)
\emph{Data-Driven Distributionally Robust Optimization Using the Wasserstein
Meric: Performance Guarantees and Tractable Reformulations}, https://link.springer.com/article/10.1007/s10107-017-1172-1

\bibitem {Ghaoui.{2003}}Ghaoui, L., Oks, M. and Oustry, F. (2003)
\emph{Worst-case Value-at-Risk and robust portfolio optimization: A conic
programming approach}, Operations Research, 51, pp. 543-556.

\bibitem {GohSim.{2010}}Goh, J. and Sim, M. (2010) \emph{Distributionally
Robust Optimization and Its Tractable Approximations}, Operations Research, 58
, pp. 595-612.

\bibitem {HuHong.{2013}}Goh, J. and Sim, M. (2013) \emph{Kullback-Leibler
divergence constrained distritbutionally robust optimization}, Available at
Optimization Online.

\bibitem {Halldorsson.{2000}}Halldorsson, B. and Tutuncu. R. (2003) \emph{An
interior-point method for a class of saddle-point problems}, Journal of
Optimization Theory and Applications, 116 , pp. 559-590.

\bibitem {Hansenbook}Hansen, L. and Sargent, T. (2008) \emph{Robustness}.
Princeton,N.J.:Princeton University Press.

\bibitem {JiangGuan.{2016}}Jiang, R. and Guan, Y. (2016) \emph{Data-Drive
Chance Constrained Stochastic Program}, Mathematical Programming, 158 , pp. 291-327.

%\bibitem {Knight.{1921}}Knight, F. (1921) \emph{Risk, Uncertainty and Profit},
%Hart, Schaffner and Marx, New York.

\bibitem {Lobo.{1999}}Lobo, M. and Boyd, S. (2000) \emph{The worst-case risk
of a portfolio}, https://web.stanford.edu/\~boyd/papers/pdf/risk\_bnd.pdf

\bibitem {Marko.{1952}}Markowitz, H. (1952) \emph{Portfolio selection}, Journal of Finance, 7, pp. 77-91.

\bibitem {Dupuis}Petersen,I., James, M. and Dupuis, P. (2000) \emph{Minimax
optimal control of stochastic uncertain systems with relative entropy
constraints}, IEEE Transactions on Automatic Control 45 (3), 398-412

\bibitem {PflugWozabal.{2006}}Pflug, G. and Wozabal, D. (2007) \emph{Ambiguity
in portfolio selection}, Quantitative Finance, 7, pp. 435-442.

%\bibitem {Scarf{1958}}Scarf, H. (1958) \emph{A min-max solution of an
%inventory problem}, Studies in the Mathematical Theory of Inventory and
%Production. Stanford Univ ersity Press, Stanford, CA, 201-209

\bibitem {Villanibook}Villani. C. (2003) \emph{Topics in optimal
transportation}. Graduate Studies in Mathmatics 58, Amer. Math. Soc.,
Providence, RI.


\bibitem {WisesmannKuhnSim.{2014}}Wisemann, W., Kuhn, D., and Sim, M. (2014)
\emph{Distributionally Robust Convex Optimization}, Operations Research, 62 ,
pp. 1358-1376.

\bibitem {Wozabal.{2012}}Wozabal,D. (2012) \emph{A framework for optimization
under ambiguity}, Annals of Operations Research, 193 , pp. 21-47.

\end{thebibliography}
\end{document}